\theoremstyle{plain}
\newtheorem{theorem}{Theorem}[section]
\newtheorem{lemma}[theorem]{Lemma}
\theoremstyle{definition}
\newcommand{\ignore}[1]{}
\newcommand{\Expect}{\operatorname{E}}
\newcommand{\Prob}{\operatorname{Pr}}
\newcommand{\bydef}{\stackrel{\rm def}{=}}
\newcommand{\ceil}[1]{\left\lceil #1 \right\rceil}
\newcommand{\poly}{{\operatorname{poly}}}
\newcommand{\rb}[2]{\raisebox{#1 mm}[0mm][0mm]{#2}}
\newcommand{\istrut}[2][0]{\rule[- #1 mm]{0mm}{#1 mm}\rule{0mm}{#2 mm}}
\newcommand{\listen}{\textsf{listen}}
\newcommand{\send}{\textsf{transmit}}
\newcommand{\idle}{\textsf{idle}}
\newcommand{\noise}{\textsf{noise}}
\newcommand{\silence}{\textsf{silence}}
\newcommand{\leader}{\textsf{leader}}
\newcommand{\nonleader}{\textsf{follower}}
\newcommand{\pp}[2]{p_{{#1}\rightsquigarrow{#2}}}
\newcommand{\Tideal}{\mathcal{T}_{\text{no-comm}}}
\newcommand{\strong}{\textsf{Strong-CD}}
\newcommand{\weak}{\textsf{Sender-CD}}
\newcommand{\recv}{\textsf{Receiver-CD}}
\newcommand{\nocd}{\textsf{No-CD}}
\newcommand{\LeaderElection}{\textsf{Leader Election}}
\newcommand{\ApproximateCounting}{\textsf{Approximate Counting}}
\newcommand{\Census}{\textsf{Census}}
\newcommand{\SuccessfulCommunication}{\textsf{Successful Communication}}
\newcommand{\Szemeredi}{Szemer\'{e}di}
\newcommand{\Erdos}{Erd\H{o}s}
\newcommand{\Renyi}{R\'{e}nyi}
\newcommand{\Sos}{S\'{o}s}
\title{Exponential Separations in the Energy\\ Complexity of Leader Election\thanks{Supported by NSF grants CNS-1318294, CCF-1514383, CCF-1637546, and CCF-1815316.
Research performed while Ruosong Wang and Wei Zhan were visiting University of Michigan.
Ruosong Wang and Wei Zhan are supported in part by the National Basic Research Program of China Grant 2015CB358700, 2011CBA00300, 2011CBA00301, the National Natural Science Foundation of China Grant 61202009, 61033001, 61361136003.}}
\author{Yi-Jun Chang\\ University of Michigan \\ \footnotesize \texttt{cyijun@umich.edu}\and
Tsvi Kopelowitz\\ Bar-Ilan University \\ \footnotesize \texttt{kopelot@gmail.com} \and
Seth Pettie\\ University of Michigan \\ \footnotesize \texttt{pettie@umich.edu} \vspace*{.3cm}
\and
Ruosong Wang\\ Carnegie Mellon University \\ \footnotesize \texttt{ruosongw@andrew.cmu.edu} \and
Wei Zhan\\ Princeton University \\ \footnotesize \texttt{weizhan@cs.princeton.edu}}
\begin{document}
\date{}
\maketitle
\thispagestyle{empty}
\setcounter{page}{0}

\begin{abstract}
{\em Energy} is often the most constrained resource for battery-powered wireless devices,
and most of the energy is often spent on {\em transceiver usage} (i.e., transmitting and receiving packets) rather than computation.
In this paper we study the {\em energy complexity}
of fundamental problems in several models of wireless radio networks.
It turns out that energy complexity is very sensitive to whether the devices can generate random bits
and their ability to {\em detect collisions}.  We consider four collision detection models:
\strong{} (in which transmitters and listeners detect collisions),
\weak{} (in which only transmitters detect collisions) and \recv{} (in which only listeners detect collisions),
and \nocd{} (in which no one detects collisions).

The take-away message of our results is quite surprising. For randomized algorithms,
there is an exponential gap between the energy complexity of \weak{} and \recv:
\[
\mbox{Randomized:} \; \mbox{\nocd} \;=\; \mbox{\weak} \;\gg\; \mbox{\recv} = \mbox{\strong}
\]
and for deterministic algorithms, there is another exponential gap in energy complexity,
\emph{but in the reverse direction}:
\[
\mbox{Deterministic:} \; \mbox{\nocd} \;=\; \mbox{\recv} \;\gg\; \mbox{\weak} = \mbox{\strong}
\]
Precisely, the randomized energy complexity
of \LeaderElection{} is $\Theta(\log^* n)$ in \weak{} but $\Theta(\log(\log^* n))$ in \recv,
where $n$ is the number of devices, which is unknown to the devices at the beginning;
the deterministic complexity of \LeaderElection{} is $\Theta(\log N)$ in \recv{} but $\Theta(\log\log N)$ in \weak,
where $N$ is the size of the ID space.

There is a tradeoff between time and energy.  We provide a new upper bound on the
time-energy tradeoff curve for randomized algorithms.
A critical component of this algorithm is a new deterministic \LeaderElection{} algorithm for {\em dense} instances, when $n=\Theta(N)$, with inverse Ackermann energy complexity.
\end{abstract}
\newpage

\section{Introduction}\label{sect:introduction}

In many networks of wireless devices the scarcest resource is {\em energy},
and the  lion's share of energy is often spent on {\em radio transceiver}
usage~\cite{PolastreSC05,BarnesCMA10,Lee+13,SivalingamSA97}---transmitting and receiving packets---%
not on computation per se.
In this paper we investigate the {\em energy complexity} of fundamental
problems in synchronized single-hop wireless networks: \LeaderElection,
\ApproximateCounting, and taking a \Census.

In all models we consider
{\em time} to be partitioned into discrete slots;
all devices have access to a single shared channel and can choose, in each time slot,
to either \send{} a message $m$ from some space $\mathcal{M}$, \listen{} to the channel, or remain \idle.
Transmitting and listening each cost one unit of energy; we measure the energy usage
of an algorithm on $n$ devices by the worst case energy usage of any device.
For the sake of simplicity we assume computation is free and the message size is unbounded.
If exactly one device transmits, all listeners hear the message $m$, and if zero devices
transmit, all listeners hear a special message $\lambda_S$ indicating \silence.
We consider four collision detection models depending on whether transmitters and listeners can detect collisions.

\begin{description}
\item[\strong.] Each transmitter and listener receives one of three signals:
$\lambda_S$ (\silence, if zero devices transmit), $\lambda_N$ (\noise, if $\ge 2$ devices transmit),
or a message $m \in \mathcal{M}$ (if one device transmits).

\item[\weak.] (Often called ``No-CD''~\cite{JurdzinskiKZ02}) Each transmitter and listener receives one of two signals:
$\lambda_S$ (zero or $\ge 2$ devices transmit), or a message $m \in \mathcal{M}$ (if one device transmits).
Observe that the \weak{} model has no explicit collision detection, but still allows for sneaky collision detection:
if a sender hears $\lambda_S$, it can infer that there was at least one other sender.

\item[\recv.] Transmitters receive no signal.  Each listener receives one of three signals:
$\lambda_S$ (\silence, if zero devices transmit), $\lambda_N$ (\noise, if $\ge 2$ devices transmit),
or a message $m \in \mathcal{M}$ (if one device transmits).

\item[\nocd.] Transmitters receive no signal. Listeners receive one of two signals:
$\lambda_S$ (zero or $\ge 2$ devices transmit) or a message $m \in \mathcal{M}$.
\end{description}



Each of the four models comes in both randomized and deterministic variants.
A key issue is breaking symmetry.  Whereas randomized models easily accomplish
this by having devices flip independent random coins, deterministic models
depend on having pre-assigned unique IDs to break symmetry.

\begin{description}
\item[Randomized Model.] In the randomized
model all $n$ devices begin in exactly the same state,
and can break symmetry by
generating private random bits.
The number $n$ is unknown and unbounded.
The maximum allowed failure probability of a randomized algorithm is at most $1/\poly(n)$.
In a \emph{failed} execution, devices may consume unbounded energy and never halt~\cite{JurdzinskiKZ02b,JurdzinskiKZ02c,Brandes2016}.
\item[Deterministic Model.] All $n$ devices have unique IDs in the range $[N] \bydef \{1,\ldots,N\}$,
where $N$ is common knowledge but $n \leq N$ is unknown.
\end{description}
To avoid impossibilities, in the \nocd{} model it is promised that $n\ge 2$. See Section~\ref{sec:otherlb}
for a discussion of \emph{loneliness detection}.

It could be argued that real world devices rarely endow transmitters with
more collision detection power than receivers, so the \weak{} model does not merit study.
We feel this thinking gets the order backwards.  There is a certain cost for equipping
tiny devices with extra capabilities (e.g., generating random bits or detect collisions)
so how are we to tell whether adding these capabilities is \emph{worth the expense}?
To answer that question we \emph{first} need to
determine the complexity of the problems that will ultimately be solved by the network.
The goal of this work is to understand the power of various abstract models,
not to cleave closely to existing real world technologies, simply because they exist.
In this paper, we consider the following three fundamental distributed problems.

\begin{description}
\item[Leader Election.] Exactly
one device designates itself the \leader{} and all others designate
themselves \nonleader. For technical reasons, we require that the computation ends when the \leader{} sends a message while every \nonleader{} listens to the channel.

\item[Approximate Counting.] At the end of the computation all devices agree on
an estimate $\tilde{n}$ of the network size $n$ such that $\tilde{n}=\Theta(n)$.

\item[Census.]
At the end of the computation some device announces a list
of the IDs of all devices. We only study this problem in the deterministic model.
\end{description}

Notice that any deterministic algorithm that solves \Census{} is also capable of solving \LeaderElection{} and \ApproximateCounting{} with the same runtime and energy cost.

\subsection{New Results}
In the randomized model, we show
that the energy complexity of \LeaderElection{} and \ApproximateCounting{}
are $\Theta(\log^* n)$ in \weak{} and \nocd{} but $\Theta(\log(\log^* n))$ in \strong{} and \recv{}. The lower bounds also apply to the {\em contention resolution} problem, and this establishes that the recent $O(\log(\log^* n))$ contention resolution
protocol of Bender, Kopelowitz, Pettie, and Young~\cite{BenderKPY16} is optimal.
 Our upper bounds  offer a time-energy tradeoff. See Table~\ref{table:time-energy-intro} for the energy cost of our algorithm under different runtime constraints.

\begin{table}[h!]
\centering
{\small
\begin{center}
  \begin{tabular}{| p{4.1cm} | p{4.8cm} | p{5.5cm} | }
    \hline
    	& \multicolumn{2}{c|}{\sc Energy Complexity}\\
\rb{1.7}{\sc Time Complexity} &
\multicolumn{1}{l}{\strong{}  or \recv{}} &
\multicolumn{1}{l|}{\weak{}  or \nocd{}}
    \\ \hline
    $O(n^{o(1)})$
    &$O(\log (\log^\ast n))$
    &$O(\log^\ast n)$\istrut[2]{4}
    \\ \hline
    $O(\log^{2+\epsilon} n)$, $0<\epsilon \leq O(1)$
    &$O(\log ( \epsilon^{-1} \log \log \log n))$
    &$O(\epsilon^{-1} \log \log \log n)$\istrut[2]{4}
    \\ \hline
    $O(\log^{2} n)$
    &$O(\log \log  \log n)$
    &$O(\log \log n)$\istrut[2]{4}\\\hline
  \end{tabular}
\end{center}
}
\caption{\label{table:time-energy-intro} Time-energy tradeoff of randomized \ApproximateCounting{} and \LeaderElection.
Notice that the third line is a special case of the second line when $\epsilon = 1 / \log \log n$.}
\end{table}

For \LeaderElection{} we establish matching bounds in all the deterministic models.
In \strong~and \weak, \LeaderElection\ requires $\Omega(\log\log N)$ energy even when $n=2$,
and \Census\ can be solved with $O(\log \log N)$ energy and $O(N)$ time, for any $n\le N$.
However, in \nocd{} and \recv, the energy complexity of these
problems jumps to $\Theta(\log N)$~\cite{JurdzinskiKZ02c}.

Finally, we prove that when the input is {\em dense} in the ID space,
meaning $n=\Theta(N)$, \Census{} can actually be computed with
only $O(\alpha(N))$ energy and $O(N)$ time, even in \nocd.
To our knowledge, this is the first time inverse-Ackermann-type recursion
has appeared in distributed computing.

\subsection{Prior Work}

Jurdzinski et al.~\cite{JurdzinskiKZ02} studied the deterministic energy complexity of \LeaderElection{}
in the \weak{} model.  They proved that dense instances $n=\Theta(N)$ can be solved
with $O(\log^* N)$ energy, and claimed that the complexity of the sparse instances is between
$\Omega(\log\log N/\log\log\log N)$ and $O(\log^{\epsilon} N)$.
While the lower bound is correct, the algorithm presented in~\cite{JurdzinskiKZ02} is not.\footnote{T. Jurdzinski. (Personal communication, 2016.)}
The most efficient published algorithm uses $O(\sqrt{\log N})$ energy, also due to Jurdzinski et al.~\cite{JurdziskiKZ03}.
The same authors~\cite{JurdzinskiKZ02b} gave a reduction from randomized \weak{} \ApproximateCounting{}
to deterministic \LeaderElection{} over ID space $N=O(\log n)$, which, using~\cite{JurdziskiKZ03},
leads to an $O(\sqrt{\log\log n})$
energy algorithm for \ApproximateCounting.
In~\cite{JurdzinskiKZ02c} the authors gave a method for simulating
\weak{} protocols in the \nocd{} model, and proved that deterministic \nocd{} \LeaderElection{} takes $\Omega(\log N)$ energy.
Nakano and Olariu~\cite{NakanoO00} showed that $n$ devices in the \weak{} model
can assign themselves distinct IDs in $\{1,\ldots,n\}$ with $O(\log\log n)$ energy in expectation.

Recently, Bender et al.~\cite{BenderKPY16} gave a method for
circuit simulation in the \strong{} model,
which led to randomized \ApproximateCounting{} and \LeaderElection{} protocols
using $O(\log(\log^* n))$ energy and $n^{o(1)}$ time.
An earlier algorithm of Kardas et al.~\cite{kardas2013energy} solves \LeaderElection{}
in the \strong{} model in $O(\log^\epsilon n)$ time using $O(\log\log\log n)$ energy,
in expectation but not with high probability.

Most of the previous work in the radio network model has been concerned with {\em time},
not energy.   Willard~\cite{Willard86} proved that $O(\log\log n)$ time is necessary and sufficient
for one device to successfully transmit in the \strong{} model with constant probability;
see~\cite{NakanoO00b} for tradeoffs between time and success probability.
In the \weak{} model this problem requires $\Theta(\log^2 n)$ time to solve, with probability $1-1/\poly(n)$~\cite{Farach-ColtonFM06,JurdzinskiS02,Newport14}.
Greenberg and Winograd~\cite{greenberg1985lower}
proved that if {\em all} devices need to send a message, $\Theta(n \log_n(N))$ time
is necessary and sufficient in the deterministic \strong{} model.

In multi-hop radio networks, \LeaderElection{} and its related problems (e.g., broadcasting and gossiping) have been studied
extensively, where the bounds typically depend on both the
diameter and size of the network, whether it is directed, and whether randomization
and collision detection are available.  See, e.g.,~\cite{bar1991efficient,bar1992time,chlebus2012electing,clementi2003distributed,czumaj2003broadcasting,KushilevitzM98,alon1991lower,KowalskiP05,KowalskiP09,ghaffari2013near,CzumajD16,CzumajD17,Censor-HillelHHZ17}.
Schneider and Watterhofer~\cite{SchneiderW10} investigated the use of collision detection
in multihop radio networks when solving archetypal problems such as MIS, $(\Delta+1)$-coloring, and broadcast.
Their results showed that the value of collision detection depends on the problem being solved.

Cornejo and Kuhn~\cite{cornejo2010deploying} introduced the {\em beeping} model, where no messages are sent; the only signals
are $\lambda_N$ and $\lambda_S$: noise and silence.
The complexity of \ApproximateCounting{} was studied in~\cite{Brandes2016}
and the ``state complexity'' of \LeaderElection{} was studied in~\cite{gilbert2015computational}.

In adversarial settings a {\em jammer} can interfere with communication.
See~\cite{kutylowski2003adversary,daum2012leader} for leader election protocols resilient to jamming.
In a {\em resource-competitive} protocol~\cite{BenderFMSDGPY15}
the energy cost of the devices is some function of the energy cost of the jammer.
See~\cite{BenderFGY16} for resource-competitive contention resolution,
and~\cite{GilbertKPPSY14,KingSY11} for resource-competitive point-to-point communication and broadcast protocols.

\subsection{Organization and Technical Overview}

To establish the two sharp exponential separations we need 8 distinct upper and lower bounds.
The $O(\log N)$ upper bound on deterministic \nocd{} \LeaderElection{} is trivial
and the matching lower bound in \recv{} is provided in~\cite{JurdzinskiKZ02c}.
The $O(\log(\log^* n))$ upper bound from~\cite{BenderKPY16}
on randomized \LeaderElection\ and \ApproximateCounting{} works only in \strong.
This paper contains proofs of all remaining upper and lower bounds.
In addition, we offer  a simpler proof of the $\Omega(\log N)$ lower bound in deterministic \recv, and provide
 an $O(\alpha(N))$ energy protocol for \Census\ in deterministic \nocd\ when $n=\Theta(N)$.


\paragraph{Lower Bounds.}
In Section~\ref{sct:detlb} we begin with a surprisingly simple proof that protocols solving any
non-trivial problem in the deterministic \strong{} model require $\Omega(\log\log N)$ energy
if the devices are adaptive and $\Omega(\log N)$ if they are non-adaptive.
It turns out that \recv{} algorithms are essentially forced to be non-adaptive,
so this yields~$\Omega(\log N)$ lower bounds for deterministic \LeaderElection{} in \recv.
The $\Omega(\log\log N)$ lower bound combines a decision tree representation of the algorithm
with the encoding argument that Katona and \Szemeredi~\cite{KatonaS67} used to solve the
biclique covering problem of \Erdos, \Renyi, and \Sos~\cite{ErdosRS66}.

In Section~\ref{app-sect:randLB} we prove the $\Omega(\log^* n)$ and
$\Omega(\log(\log^* n))$ lower bounds on
randomized \ApproximateCounting\ and \LeaderElection.
These lower bounds begin by embedding any algorithm into an infinite
\emph{universal} DAG that is basically a decision tree with some reconvergent paths.
The proof is information theoretic. There are only two methods
for devices in \strong\ and \recv{} to learn new information. The first method is via direct communication (in which one device successfully transmits a message, and some subset of devices listen); the second method is via inference (in which transmitting or listening devices detect a collision or silence,
which informs their future decisions).
The information theoretic capacity of the first method is essentially unbounded whereas the second method
is bounded by $1$-bit per unit energy in \strong{} and usually less in \recv.
We show that any algorithm with a reasonable time bound can be
forced to learn an approximation of $n$ via the information theoretically well behaved second method.

\paragraph{Upper Bounds.}
In Sections~\ref{sct:detalg} and~\ref{sec:detle_dense} we present all deterministic upper bounds:
an $O(\log \log N)$ energy protocol for \Census,
and an $O(\alpha(N))$ energy protocol for \emph{dense} \Census, when $n=\Theta(N)$.
Notice that a protocol for \Census\ also solves \LeaderElection.
The first protocol combines \emph{van Emde Boas}-like recursion with
a technique that lets a group of devices function \emph{as one device} and thereby share energy costs.

In Section~\ref{app-sect:randUB} we present upper bounds
on randomized \LeaderElection\ and \ApproximateCounting.
When time is not too constrained,
the \weak{} and \recv{} protocols have energy complexity $O(\log^* n)$ and $O(\log(\log^* n))$.
Our protocols naturally adapt to any time bound that is $\Omega(\log^2 n)$, where the energy complexity
gradually increases as we approach this lower limit.  See Table~\ref{table:time-energy-intro}.
These protocols are randomized, and so do not assume distinct IDs; nonetheless, they
use the deterministic $\alpha(N)$ dense \Census{} algorithm of Section~\ref{sec:detle_dense}.


\section{Deterministic Lower Bounds}\label{sct:detlb}

In this section we prove deterministic lower bounds for the \SuccessfulCommunication{} problem, 
which immediately lead to the same lower bounds for \LeaderElection.
The goal of \SuccessfulCommunication{} is to have {\em some} time slot where exactly one device transmits 
a message while at least one other device listens to the channel.
Once a successful communication occurs, the algorithm is terminated on all devices.
Throughout the section, we focus on the special case of $n=2$. Each device knows that $n=2$, but not the ID of the other device.
In this case, the \strong{} and \weak{} models are the same, and the \recv\ and  \nocd\ models are the same.
Theorem~\ref{thm:detlbna} has been proved in~\cite{JurdzinskiKZ02c}, in this section we offer a simpler proof.

\begin{figure}[!ht]
  \centering
  \subfloat[]{
    \includegraphics[width=0.3\linewidth]{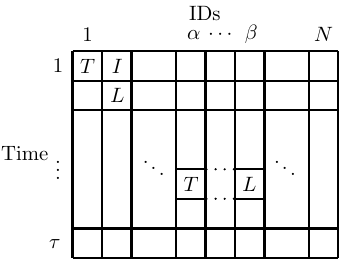}\label{subfig:nonadapt}
  }\
  \subfloat[]{
    \includegraphics[width=0.55\linewidth]{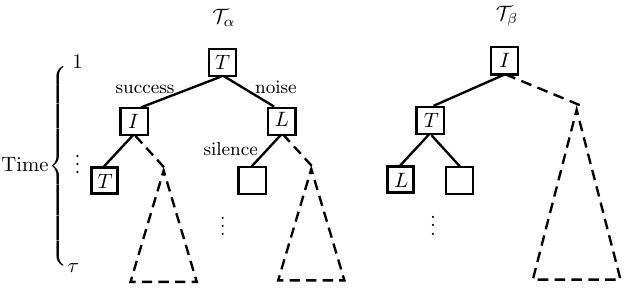}\label{subfig:adapt}
  }
  \caption{(a) Table for an non-adaptive algorithm. 
  (b) Two binary decision trees $\mathcal{T}_\alpha$ and $\mathcal{T}_\beta$ for an adaptive algorithm.
  }
  \end{figure}

\begin{theorem}\label{thm:detlbna}
The deterministic energy complexity of \LeaderElection{} is $\Omega(\log N)$ in \nocd\ and \recv, even when $n=2$.
\end{theorem}
\begin{proof}
For the case of $n=2$ in \nocd\ and \recv, the two devices receive no feedback from the channel until the first successful communication  occurs.
Thus, to prove the theorem, it suffices to show that the energy cost of any {\em non-adaptive} deterministic algorithm $\mathcal{A}$ for \SuccessfulCommunication{} is $\Omega(\log N)$. In a non-adaptive algorithm, the sequence of actions taken by a device is solely a function of its ID, not the information
it receives from the channel.

Let $\tau=\tau(N)$ be the running time of $\mathcal{A}$.
This algorithm can be encoded by a table in the set $\{T,L,I\}^{\tau\times N}$; see Figure~\ref{subfig:nonadapt}.
The $(j,i)$-entry of the table is the action $T$ (\send), $L$ (\listen), or $I$ (\idle) taken by the device of ID $i$ at time $j$.
Let $E_i$ be the energy cost of device of ID $i$, which equals the number of $T$ and $L$ entries in the $i$th column.

We now prove that $\max_i E_i \ge \log N$. The proof is inspired by Katona and \Szemeredi's \cite{KatonaS67}
lower bound of the biclique covering problem.
Encode the $i$th column by a binary string of length $\tau$ by replacing $T$ with $0$, $L$ with $1$, and $I$ with either $0$ or $1$.
There are $2^{\tau-E_i}$ possible encodings for column $i$.
To solve \SuccessfulCommunication, the two devices in the network must successfully communicate at some time slot.
Thus, for any two distinct IDs $\{\alpha, \beta\}$,
there must be a row $r$ such that the $(r,\alpha)$- and $(r,\beta)$-entries of the table contain one $T$ and one $L$.
Therefore, no binary string is an eligible encoding of two distinct columns.
Since there are $2^\tau$ possible encodings, we have:
  \[
  \sum_{i=1}^N 2^{\tau-E_i}\leq 2^\tau, \;\; \mbox{ which implies } \;\; \sum_{i=1}^N \frac{1}{2^{E_i}}\leq 1.
  \]
  This implies $\max_i E_i \geq \log N$.
  Moreover, the convexity of $f(x)=2^{-x}$ implies $N \cdot 2^{-\sum_{i=1}^N {E_i / N}} \leq 1$, 
  and so $\sum_i E_i\geq N\log N$. Thus, even on average the energy cost of $\mathcal{A}$ is $\Omega(\log N)$.
\end{proof}

\begin{theorem}\label{thm:det_lb}
The deterministic energy complexity of \LeaderElection{} is $\Omega(\log\log N)$ in \strong\ and \weak, even when $n=2$.
\end{theorem}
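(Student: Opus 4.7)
The plan is to extend the Katona--\Szemeredi{} encoding of Theorem~\ref{thm:detlbna} from a flat $\tau\times N$ action table to the branching tree of an adaptive protocol. First, I would represent each device $\alpha$'s algorithm as a rooted binary decision tree $\mathcal{T}_\alpha$ of depth at most $E$, whose internal nodes carry an action label $T$ or $L$ and whose two children correspond to the two possible feedback signals. Since $n=2$ causes the \strong{} and \weak{} models to coincide, feedback is a single bit at every internal node: collision vs.\ no-collision at a $T$-node, and silence vs.\ received-message at an $L$-node, where the latter halts the algorithm successfully.

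Next I would analyze the coupled execution of a pair $(\alpha,\beta)$ and observe that matching actions produce identical feedback on both sides --- both $T$ gives collision to both, both $L$ gives silence to both --- while differing actions produce a $T/L$ encounter that ends the algorithm. Consequently both devices descend their respective trees along the \emph{same} left--right path from the roots, as sketched in Figure~\ref{subfig:adapt}, until they first reach a node where the two trees disagree on the label; correctness of the protocol forces such a disagreement to occur within depth $E$.

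To convert this into a lower bound I would argue that the $N$ devices must produce $N$ pairwise-distinct $T/L$ labellings of a depth-$E$ binary tree: two identically-labelled devices remain perfectly synchronized and never meet a $T/L$ pair, hence fail. The number of labellings of the $\leq 2^E-1$ internal nodes is at most $2^{2^E-1}$, so $N\leq 2^{2^E-1}$ and therefore $E\geq \log_2(\log_2 N+1)=\Omega(\log\log N)$, as desired. This mirrors the non-adaptive encoding of Theorem~\ref{thm:detlbna}, except that the object being encoded is now an exponentially larger branching structure, which is precisely what shrinks the lower bound from $\log N$ to $\log\log N$.

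The main obstacle is that an adaptive protocol need not move the two devices in lockstep: the time slot of an action can depend on the feedback history, so two devices with the same abstract $T/L$ tree could desynchronize and stumble into a $T/L$ encounter by accident. I would handle this with a canonicalization step that inserts idle slots to enforce a global schedule in which every level-$i$ action occurs at a common predetermined time; this is cost-free in energy (idling is free) and reduces the general case to the synchronous regime analyzed above. Formalizing this canonicalization cleanly --- or, equivalently, designing a timing-robust encoding that assigns a distinct binary string to every device --- is the delicate step of the proof.
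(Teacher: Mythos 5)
Your high-level plan --- apply the Katona--\Szemeredi{} encoding to decision trees rather than to a flat action table, and win a logarithm because a tree with $\le 2^E-1$ action nodes admits only $2^{2^E-1}$ labellings --- is exactly the paper's idea, and your analysis of the coupled execution (matching actions give matching feedback, the first $\{T,L\}$ encounter is where the paths diverge) is also the paper's key claim. But the gap you flag at the end is real and your proposed fix does not close it. You index your trees by \emph{energy level} (depth $E$, no idle nodes), so for the pigeonhole step you need that two devices with identical abstract $T/L$ trees never achieve successful communication. That is false: the time at which a device performs its $i$-th non-idle action may depend on its ID, so two devices with the same abstract tree can desynchronize, and one device's transmission can land in a slot where the other happens to listen. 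The canonicalization you propose --- insert idle slots so that every level-$i$ action occurs at a common global time $t_i$ --- does not repair this, because it is not a correctness-preserving transformation of the protocol: a pair $(\alpha,\beta)$ that communicates in the original protocol at a slot where $\alpha$ is performing its 2nd action and $\beta$ its 5th would, after realignment, act at different times $t_2\neq t_5$ and never meet. So the reduction ``WLOG the protocol is synchronous in energy levels'' is unjustified, and with it the claim that $N$ distinct abstract labellings are needed.

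The paper sidesteps this by keeping the decision tree $\mathcal{T}_i$ indexed by \emph{time}: it has depth $\tau$ (the running time), idle nodes are retained as unary nodes, and the tree is then padded to a full binary tree on $2^\tau-1$ positions. Because every node sits at a fixed time slot, a common left--right turn sequence really does mean the two devices act in the same slots, so the first $\{T,L\}$ conflict occurs at a \emph{common position} of the full tree --- this is the ``timing-robust encoding'' you were looking for. The energy bound then enters only through the count of constrained positions: along any path at most $E_i$ nodes are $T$ or $L$, so the whole tree contains at most $2^{E_i}-1$ such nodes, and the remaining $\left(2^\tau-1\right)-\left(2^{E_i}-1\right)$ positions (idle and dummy) are free bits in the encoding. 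The pigeonhole inequality $\sum_i 2^{-(2^{E_i}-1)}\le 1$ then yields $\Omega(\log\log N)$ even on average, without ever needing to modify the protocol. To complete your argument you would need to replace the canonicalization with this time-indexed encoding (or an equivalent device); as written, the counting step rests on an unproved and, in general, false reduction.
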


\begin{proof}
It suffices to show that the energy cost of any deterministic algorithm for \SuccessfulCommunication{} is $\Omega(\log \log N)$.
Suppose we have an algorithm  $\mathcal{A}$ for \SuccessfulCommunication{} running in $\tau$ time when $n=2$.
We represent the behavior of the algorithm on the device with ID $i$ as a binary decision tree $\mathcal{T}_i$.
Each node in $\mathcal{T}_i$ is labeled by $T$ (\send), $L$ (\listen), or $I$ (\idle).
An $I$-node has one left child and no right child; a $T$-node has
two children, a left one indicating collision-free transmission, and a right one indicating a collision;
an $L$-node has two children, a left one indicating silence, and a right one indicating that the device receives a message. Notice that the algorithm terminates once a device reaches the right child of an $L$-node in the decision tree.

The left-right ordering of children is meaningless but {\em essential} to making the following argument work. Suppose that we run
$\mathcal{A}$  on two devices with IDs $\alpha$ and $\beta$. Let $t$ be the first time a   successful
communication occurs.  We claim that the paths in $\mathcal{T}_\alpha$ and $\mathcal{T}_\beta$ corresponding to the execution of $\mathcal{A}$ have exactly the  same sequence of $t-1$ left turns and right turns.  At any time slot before $t$ the possible actions performed by $\{\alpha,\beta\}$
are $\{I,I\}, \{I,T\}, \{I,L\}, \{L,L\}, \{T,T\}$.  In all cases, both $\alpha$ and $\beta$ branch to the left child, except
for $\{T,T\}$, where they both branch to the right child.  At time $t$ the actions of the two devices are $\{T,L\}$, and it is only here that they
branch in different directions.  See Figure~\ref{subfig:adapt}.

We extend each $\mathcal{T}_i$ to a full binary tree of depth $\tau$ by adding dummy nodes.
The number of nodes in a full binary tree of depth $\tau$ is $2^\tau-1$, and so we encode $\mathcal{T}_i$ by a binary string of length $2^\tau-1$
by listing the nodes in any fixed order (e.g.,  pre-order traversal),
mapping each $T$-node to $0$, $L$-node to 1, and each $I$-node or dummy node to either 0 or 1.
For any two distinct IDs $\{\alpha,\beta\}$, there must be a position in the full binary tree such that the corresponding two nodes in
 $\mathcal{T}_\alpha$ and $\mathcal{T}_\beta$ are one $T$-node and one $L$-node. Therefore, no binary string
is an eligible encoding of $\mathcal{T}_\alpha$ and $\mathcal{T}_\beta$.
If a device with ID $i$ spends energy $E_i$, then the number of $T$-nodes and $L$-nodes in $\mathcal{T}_i$ is at most $2^{E_i}-1$,
and so $\mathcal{T}_i$ has at most $2^{(2^\tau-1) - (2^{E_i}-1)}$ possible encodings.
Thus,
\[
\sum_{i=1}^N 2^{(2^\tau-1)-(2^{E_i}-1)} \leq 2^{2^\tau-1}, \;\; \mbox{ which implies } \;\; \sum_{i=1}^N \frac{1}{2^{2^{E_i}-1}}\leq 1.
\]
This implies $\max_i E_i \ge \log(\log N + 1)$.  Moreover,
the convexity of $f(x)=2^{-(2^x - 1)}$ implies $N \cdot 2^{-(2^{\sum_{i=1}^N {E_i / N}} - 1)} \leq 1$, and so $\sum_i E_i\geq N\log(\log N+1)$. 
Thus, even on average the energy cost of $\mathcal{A}$ is $\Omega(\log \log N)$.
\end{proof}

\section{Randomized Lower Bounds}\label{app-sect:randLB}

In this section we prove energy lower bounds of randomized algorithms for \ApproximateCounting{}.
Since \nocd{} is strictly weaker than \weak, the $\Omega(\log^\ast n)$ lower bound also applies to \nocd.
Similarly, the $\Omega(\log (\log^\ast n))$ lower bound for \strong{} also applies to \recv.

\begin{theorem}\label{thm:randLB-1}
The energy cost of any polynomial time \ApproximateCounting{} algorithm with
failure probability $1/n$ is $\Omega(\log^\ast n)$ in the \weak\ and \nocd\ models.
\end{theorem}

\begin{theorem}\label{thm:randLB-2}
The energy cost of any polynomial time \ApproximateCounting{} algorithm with
failure probability $1/n$ is $\Omega(\log (\log^\ast n))$ in the \strong\ and \recv\ models.
\end{theorem}

In Section~\ref{sec:decision-tree} we introduce the randomized decision tree, which is the foundation of our lower bound proofs.
In Section~\ref{sec:weakcdlb}, we prove Theorem~\ref{thm:randLB-1}.
In Section~\ref{sec:strongcdlb}, we prove Theorem~\ref{thm:randLB-2}.
In Section~\ref{sec:otherlb}, we demonstrate that our lower bounds proofs can be adapted to other problems such as \LeaderElection, and prove the impossibility of {\em loneliness detection} (i.e., distinguishing between $n=1$ and $n>1$) in randomized $\nocd$.

\subsection{Randomized Decision Tree}
\label{sec:decision-tree}
The process of a device $s$ interacting with the network at time slot $t$ has two phases. During the first phase (action performing phase), $s$ decides on its action, and if this action is to \send, then $s$ chooses a message $m \in \mathcal{M}$ and transmits $m$. During the second phase (message receiving phase), if $s$ chose to \listen{} or \send{} during the first phase, then  $s$ may receive a feedback from the channel which depends on the transmissions occurring at this time slot and the collision detection model.
The phases partition the time into {\em layers}. We write layer $t$ to denote the time right before the first phase of time slot $t$, and layer $t + 0.5$ to denote the time right before the second phase of time slot $t$. The choice of the message space $\mathcal{M}$ is irrelevant to our lower bound proof. The cardinality of $\mathcal{M}$ may be finite or infinite.

For a device $s$, the {\em state} of $s$ at layer $t$ includes the ordered  list of actions taken by $s$ and feedback
received from the channel until layer $t$.
There is only one possible state in layer $1$, which is the common {\em initial state} of all devices before the execution of an algorithm.

Our lower bounds are proved using a {\em single} decision tree $\mathcal{T}$, which has unbounded branching factor if $|\mathcal{M}|$ is unbounded. A special directed acyclic graph (DAG) $\mathcal{G}$ is defined to capture the behaviour of {\em any} randomized algorithm, and then the decision tree $\mathcal{T}$ is constructed by ``short-cutting'' some paths in $\mathcal{G}$.

\paragraph{DAG $\mathcal{G}$.}  The nodes in $\mathcal{G}$ represent all possible states of a device during the execution of any algorithm. Similarly, the arcs represent all legal transitions between states during the execution of any algorithm. Therefore, each arc connects only nodes in adjacent layers, and the root of $\mathcal{G}$ is the initial state.

Let $t \in \mathds{Z}^+$. A transition from a state $u$ in layer $t$ to a state $v$ in layer $t + 0.5$ corresponds to one of the possible $|\mathcal{M}|+2$ actions that can be performed in the first phase of time slot $t$ (i.e., \send\ $m$ for some $m \in \mathcal{M}$, \listen, or \idle). The transitions from a state $u$ in layer $t + 0.5$ to a state $v$ in layer $t+1$ are more involved.
Based on the action performed in the first phase of time slot $t$ that leads to the state $u$, there are three cases:
\begin{itemize}
\item Case: the action is \idle. The state $u$ has one outgoing arc corresponding to doing nothing.
\item Case: the action is \listen. The state $u$ has $|\mathcal{M}|+2$ outgoing
arcs in \strong{}, or $|\mathcal{M}|+1$ in  \weak, corresponding to all possible channel feedbacks that can be heard.
\item Case: the action is \send. The state $u$ has two outgoing arcs. The first (resp., second) outgoing arc corresponds to the message transmission succeeding (resp., failing). If a failure took place, then no other device knows which message was sent by the device, and so the content of this message is irrelevant. Thus, all states $u$ in layer $t + 0.5$ that correspond to the action \send{} and share the same parent have the same child node in layer $t+1$ corresponding to a failure in transmitting the message.
The arcs corresponding to failed transmissions are what makes $\mathcal{G}$ a DAG rather than a tree.
\end{itemize}

\paragraph{Embedding an Algorithm.} Any algorithm $\mathcal{A}$ can be embedded into $\mathcal{G}$, as follows.
First of all, appropriate states, depending on $\mathcal{A}$, are designated as {\em terminal states}. Without loss of generality, we require that any terminal state must be in layer $t$ for some $t \in \mathds{Z}^+$.
Each terminal state is labelled with a specific output for the problem at hand. A device entering a terminal state $u$ terminates with the output associated with the state $u$. Any randomized algorithm is completely described by designating the terminal states together with  their outputs, and specifying the transition probabilities from states in layer $t$ to states in layer $t + 0.5$ for all $t \in \mathds{Z}^+$.

\paragraph{Randomized Decision Tree $\mathcal{T}$.}  The  tree $\mathcal{T}$ is derived from $\mathcal{G}$ as follows. The set of nodes of $\mathcal{T}$  is the set of nodes in $\mathcal{G}$  that are in layer $t$  for some  $t \in \mathds{Z}^+$. For any two states $u$ in  layer $t \in \mathds{Z}^+$ and $v$ in  layer $t+1$ that are linked by a directed path in  $\mathcal{G}$, there is a  transition from $u$ to $v$ in  $\mathcal{T}$.
It is straightforward to see that $\mathcal{T}$ is a rooted tree. See Figure~\ref{fig-tree} for an illustration of both  $\mathcal{G}$ and $\mathcal{T}$ in the \strong{} model with $\mathcal{M}=\{m_1, \ldots, m_k\}$. Notice that in the \strong{} model, a device transmitting a message $m_i$ to the channel at a time slot must not hear $\lambda_S$ in the same time slot. If the transmission is successful, it hears the message $m_i$; otherwise it hears $\lambda_N$.

For a state $u$ in layer $t \in \mathds{Z}^+$, and for an action $x \in \{\idle,\listen,\send\}$, we write $\pp{u}{x}$ to denote the probability that a device in state $u$ performs action $x$ in the first phase of time slot $t$.

\begin{figure}[h]
\begin{center}
\includegraphics[width=0.9\textwidth]{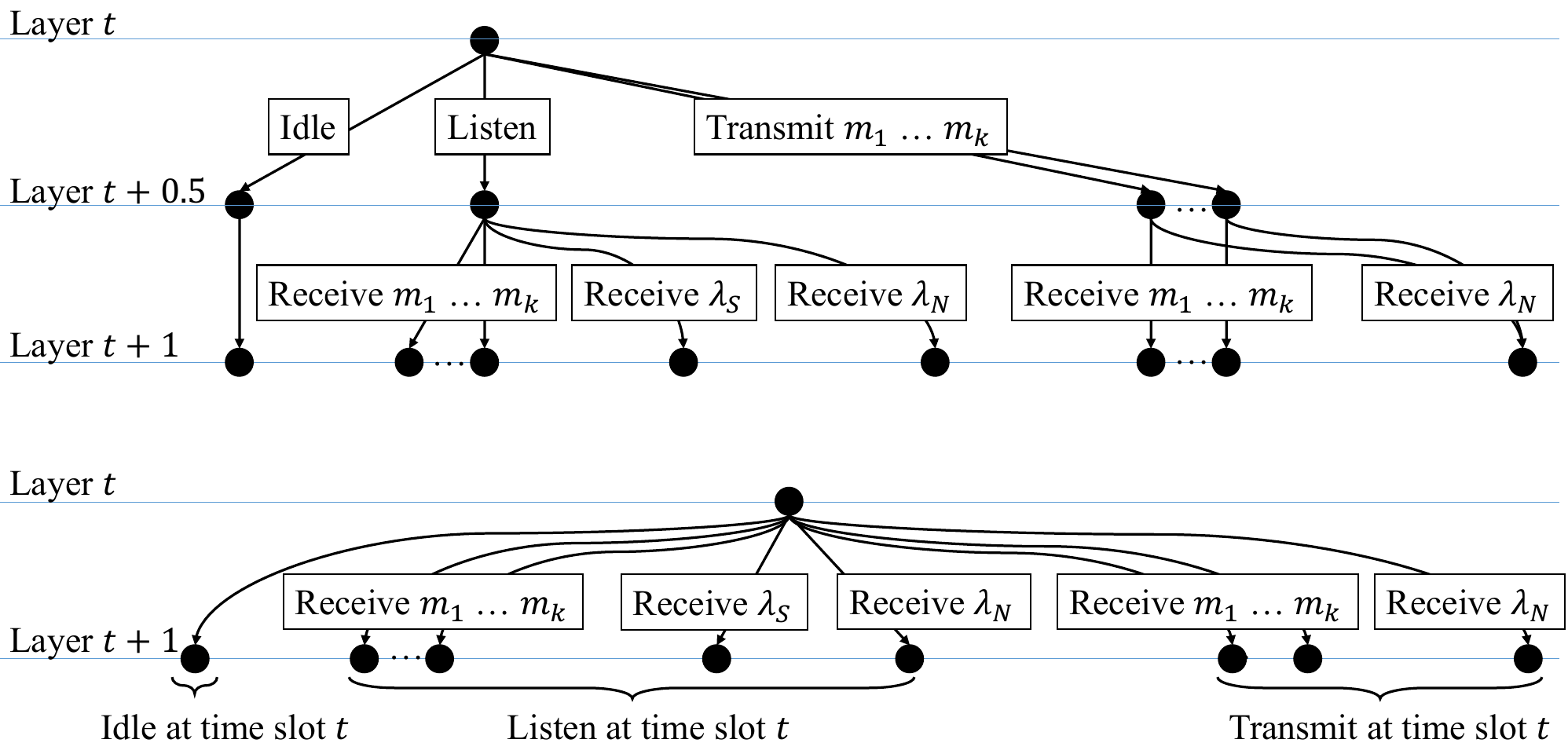}
\end{center}
    \caption{Upper: a portion of  $\mathcal{G}$. Lower: the corresponding portion in $\mathcal{T}$.
    } \label{fig-tree}
\end{figure}

\paragraph{Time and Energy Complexity.}
An execution of an algorithm for a device is completely described by a directed path $P = (u_1, u_2, \ldots, u_k)$ in $\mathcal{T}$ such that $u_t$ is in time slot $t$ for each $1\leq t \leq k$, and $u_k$ is the only terminal state in $P$. The runtime of the device is $k$. The amount of energy the device spends is the number of transitions corresponding to \listen{} or \send{} in $P$.
The time (resp., energy) of an execution of an algorithm is the maximum time (resp., energy) spent by any device.

\subsection{Lower Bound in the \weak\ Model}
\label{sec:weakcdlb}
In this section we prove Theorem~\ref{thm:randLB-1}.
Let  $\mathcal{A}$ be any $T(n)$ time   algorithm for \ApproximateCounting{} in  \weak{}  with failure probability at most $1/n$.
We  show that the energy cost of $\mathcal{A}$  is $\Omega(\log^\ast n)$.


\paragraph{Overview.}
The high level idea of the proof is as follows. We will carefully select a sequence of network sizes $\{n_i\}$ with {\em checkpoints} $\{d_i\}$ such that $d_i < n_i  < d_{i+1}$ and $T(n_i) < d_{i+1}$.
There are two main components in the proof. The first component is to  demonstrate that, with probability  $1-1/\poly(n_i)$, no message is successfully transmitted before time $d_i$ when running $\mathcal{A}$ on $n_i$ devices, i.e., every transmission ends in a collision. This limits the amount of information that could be learned from a device.
The second component is to prove that, for $j>i$, in order for a device $s$ to learn enough information to distinguish between $n_i$ and $n_j$ within $T(n_i) < d_{i+1}$ time slots, the device $s$ must use at least one unit of energy within time interval $[d_i, d_{i+1}-1]$. The intuition is briefly explained as follows. Given that $n \in \{n_i, n_j\}$, with high probability, every transmission ends in a collision before time $d_i$, and so $s$ has not yet obtained enough information to  distinguish between $n_i$ and $n_j$ by the time $d_i - 1$. The only way $s$ can gain information is to use energy, i.e., \listen{} or \send. It is required that $s$ terminates by time $T(n_i)$ if the total number of devices is $n_i$, and so $s$ must use at least one unit of energy within time interval $[d_i, T(n_i)] \subseteq [d_i, d_{i+1}-1]$.

\paragraph{Truncated Decision Tree.} The {\em no-communication tree} $\Tideal$ is defined as the subtree of $\mathcal{T}$ induced by the set of all states $u$ such that no transition in the path from the root to $u$ corresponds to receiving a message in $\mathcal{M}$. In other words, $\Tideal$ contains exactly the states whose execution history contains no successful communication.
Notice that in \weak{} each state in $\Tideal$ has exactly three children, and the three children correspond to the following three pairs of action performed and channel feedback received:  $(\send,\lambda_S)$, $(\listen,\lambda_S)$, and $(\idle, \text{N/A})$.

For each state $u$ at layer $t$ of the tree $\Tideal$, we define the {\em probability estimate} $p_u$ inductively as follows. If $u$ is the root, $p_u = 1$; otherwise $p_u = p_v \cdot \pp{v}{x}$, where $v$ is the parent of $u$, and $x$ is the action  performed at time slot $t-1$ that leads to the state $u$. Recall that $\pp{v}{x}$ is defined as  the probability for a device in $v$ (which is a state in layer $t-1$) to perform $x$ at time slot $t-1$.
Intuitively, if no message is successfully sent in an execution of $\mathcal{A}$, the proportion of devices entering $u$ is
well concentrated around $p_u$, given that $p_u$ is high enough. See Figure~\ref{fig-nocomm} for an illustration of
no-communication tree $\Tideal$ and probability estimates in the \weak{} model.

\begin{figure}[h]
\begin{center}
\includegraphics[width=0.9\textwidth]{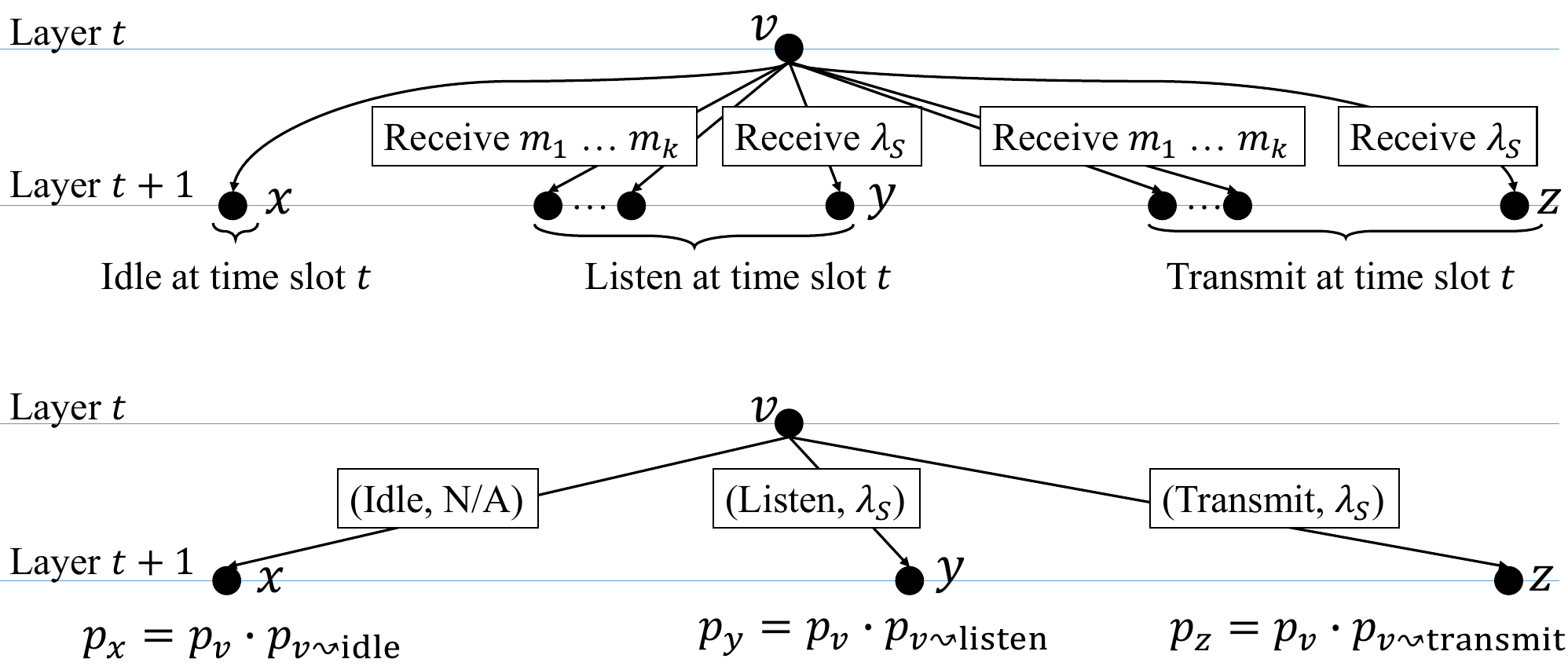}
\end{center}
    \caption{Upper: a portion of the tree $\mathcal{T}$ in \weak. Lower: the corresponding portion in the no-communication tree $\Tideal$.
    } \label{fig-nocomm}
\end{figure}

Given the runtime constraint $T(n)$ for $\mathcal{A}$, we select an infinite sequence of {\em checkpoints} as follows: $d_1$ is a sufficiently large constant to meet the requirements in the subsequent analysis; for each $i > 1$,  $d_i$ is any number satisfying the two criteria (i) $d_i \geq 2^{{2^{2^{2^{d_{i-1}}}}}}$ and (ii) $d_i \geq T(n')+1$ for all $2^{2^{d_{i-1}}} < n' < 2^{2^{2^{2^{d_{i-1}}}}}$. For example, if $T(n)$ is a non-decreasing function and $T(n) \geq n$, then we can simply set $d_i = T\left(2^{{2^{2^{2^{d_{i-1}}}}}}\right) + 1$.\footnote{It might be possible to reduce the height of the power tower. However,  having {\em any} constant height is sufficient to prove the desired lower bound, so we do not to optimize the constant.}

\begin{lemma}\label{lem-ni-exists-app}
For each index $i$, there exists a number $n_i$ such that $2^{2^{d_i}} < n_i < 2^{2^{2^{2^{d_i}}}}$ and for each state $u \in \Tideal$ at layer at most $d_i$, either $p_u \leq n_i^{-10}$ or $p_u \geq n_i^{-1/10}$.
\end{lemma}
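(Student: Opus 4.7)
The plan is a pigeonhole estimate, carried out in the $\log\log n$ scale. First I would bound the number of relevant states: since every node of $\Tideal$ has exactly three children in the \weak{} model (the action--message pairs $(\send,\lambda_S)$, $(\listen,\lambda_S)$, and $(\idle,\mathrm{N/A})$), the number of states at layers $\leq d_i$ is at most $\sum_{t=0}^{d_i-1}3^t<3^{d_i}$.

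Next, for each such state $u$ with $0<p_u<1$, I would translate the ``bad'' condition $n^{-10}<p_u<n^{-1/10}$ into a forbidden band. Setting $L_u := \log(1/p_u)>0$, the condition becomes $\log n\in(L_u/10,\,10L_u)$, equivalently
\[
\log\log n \;\in\; (\log L_u - \log 10,\; \log L_u+\log 10),
\]
an interval of width exactly $2\log 10 = O(1)$. States with $p_u\in\{0,1\}$ trivially satisfy one of the two alternatives and impose no constraint. The allowed range $n\in(2^{2^{d_i}},2^{2^{2^{2^{d_i}}}})$ corresponds to $\log\log n\in(d_i,2^{2^{d_i}})$, an interval of length $\Theta(2^{2^{d_i}})$. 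Hence the total measure of forbidden bands, at most $3^{d_i}\cdot 2\log 10=O(3^{d_i})$, is far less than $2^{2^{d_i}}-d_i$ once $d_1$ is chosen to be a sufficiently large absolute constant; the recursion $d_i=T(2^{2^{2^{2^{d_{i-1}}}}})$ then ensures the inequality persists for all $i\geq 1$. Choosing any $\log\log n$ in the allowed range that avoids every bad band, and rounding to the nearest integer, yields the desired $n_i$.

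The main obstacle, such as it is, is really just selecting the right coordinate: a naive union bound in $\log n$ gives forbidden intervals of width $\Theta(L_u)$, which do not sum nicely. In $\log\log n$ every bad band becomes of the same constant width, and the bands corresponding to $L_u<1$ automatically fall below the allowed range $(d_i,2^{2^{d_i}})$ and can be ignored. Once this parametrization is in hand, the proof is a direct volume estimate.
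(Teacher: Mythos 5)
Your proof is correct and is essentially the paper's argument in continuous form: the paper picks the explicit integer candidates $m_1=2^{2^{d_i}}+1$, $m_k=m_{k-1}^{100}+1$ (i.e., candidates separated by constant gaps $\log 100$ on the $\log\log n$ scale, exactly matching your band width $2\log 10$) and observes that each of the fewer than $3^{d_i}$ states rules out at most one of the $3^{d_i}$ candidates, so one survives by pigeonhole. Working with an explicit integer sequence also sidesteps the one loose end in your version (ensuring the final rounding to an integer does not land back inside a forbidden band), though the huge slack between $O(3^{d_i})$ and $2^{2^{d_i}}$ makes that easy to patch.
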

\begin{proof}
Define $m_1 = 2^{2^{d_i}}+1$;  for $1 < k \leq 3^{d_i}$, define $m_k = m_{k-1}^{100} + 1$.
It is straightforward to see that  $2^{2^{d_i}} < m_1 < m_2 <  \ldots < m_{3^{d_i}} < 2^{2^{2^{2^{d_i}}}}$, so long as $d_i$ is greater than some universal constant.
For each state $u \in \Tideal$ at layer at most $d_i$, there exists at most one $m_k$ with ${m_k}^{-10} < p_u <  {m_k}^{-1/10}$.
Recall that $\Tideal$ has branching factor $3$, and hence the number of states up to layer $d_i$ is less than $3^{d_i}$.
By the pigeonhole principle, among the $3^{d_i}$ distinct integers $m_1, m_2, \ldots, m_{3^{d_i}}$, there exists one integer $n_i$ such that, for each state $u \in \Tideal$ at layer at most $d_i$, either $p_u \leq n_i^{-10}$ or $p_u \geq n_i^{-1/10}$.
\end{proof}

For each index $i$, the parameter $n_i$ is chosen to meet the statement of Lemma~\ref{lem-ni-exists-app}.
Recall that the goal of  $\mathcal{A}$ is to calculate an estimate $\tilde{n}$ that is within a multiplicative factor $c$ of $n$, where $c > 1$ is some constant.
We select the first checkpoint $d_1$  to be a large enough constant such that $c \cdot n_i < n_{i+1}$ for all $i$. We define $\mathcal{T}_{i}$ as the subtree of $\Tideal$ that consists of all states $u$ up to layer $d_i$ such that $p_u \geq n_i^{-1/10}$. Notice that $\mathcal{T}_{i} \subseteq \mathcal{T}_{i+1}$ for all $i$.

Consider an  execution of  $\mathcal{A}$ on $n_i$ devices.
Let $t \in [1, d_i]$, and denote $\mathcal{P}_{i}^{(t)}$ as the event that, for each state $u$ in layer $t$ of the decision tree $\mathcal{T}$, the number of devices entering $u$ is within the range
$n_i \cdot p_u \pm (t-1) \cdot n_i^{0.6}$ if $u$ is in layer $t$ of  $\mathcal{T}_{i}$,
and is $0$ if $u \notin \mathcal{T}_{i}$. We write $\mathcal{P}_{i} = \bigcap_{t = 1}^{d_i} \mathcal{P}_{i}^{(t)}$.
Notice that $\mathcal{P}_{i}^{(1)}$ holds with probability 1.

\begin{lemma}\label{lem-aux-1}
Let $t < d_i$ and $x \in \{\send,\listen,\idle\}$. Let $v$ be a state in  layer $t$ of $\mathcal{T}_{i}$ such that the number of devices entering $v$ is within $n_i \cdot p_v \pm (t-1) \cdot n_i^{0.6}$.
Define $m$ as the number of devices that are in state $v$ and perform action $x$ at time $t$.
The following holds with probability $1 - O(n_i^{-9})$.
If $p_v \cdot \pp{v}{x} \geq n_i^{-1/10}$, then $m$ is within $n_i \cdot p_v \cdot \pp{v}{x} \pm t \cdot n_i^{0.6}$;
 if $p_v \cdot \pp{v}{x} \leq n_i^{-10}$, then $m = 0$.
\end{lemma}
\begin{proof}
According to our choice of $n_i$, for each state $u \in \Tideal$ at layer at most $d_i$, either $p_u \leq n_i^{-10}$ or $p_u \geq n_i^{-1/10}$. Since $p_v \cdot \pp{v}{x} = p_u$ for some $u \in \Tideal$ at layer at most $d_i$, either (i) $n_i^{-10} \geq   p_v \cdot \pp{v}{x}$ or (ii)  $p_v \cdot \pp{v}{x} \geq n_i^{-1/10}$ is true.

First, consider the case $n_i^{-10} \geq   p_v \cdot \pp{v}{x}$. Notice that $\pp{v}{x} \leq n_i^{-10} / p_v \leq n_i^{-9.9}$, and recall $t \leq d_i < \log \log n_i$. We upper bound the expected value of $m$ as follows.
\begin{align*}
\Expect[m] &\leq \left(n_i \cdot p_v + (t-1) \cdot n_i^{0.6}\right) \cdot \pp{v}{x} \\
 &\leq     n_i^{-9} + (t-1) \cdot n_i^{0.6}\cdot  \pp{v}{x} \\
 &\leq    n_i^{-9} + (t-1) \cdot n_i^{-9.3} \\
 &< 2 n_i^{-9}.
\end{align*}
By Markov's inequality, $m = 0$ with probability at least $1 - 2 n_i^{-9}$, as desired.

Next, consider the case $p_v \cdot \pp{v}{x} \geq n_i^{-1/10}$. The value of $\Expect[m]$ is within $\pp{v}{x} \cdot (n_i \cdot p_v \pm (t-1) \cdot n_i^{0.6})$, which is within the range
$n_i \cdot p_v \cdot \pp{v}{x} \pm (t-1) \cdot n_i^{0.6}$. Let $\delta = n_i^{-0.4}/2$. A simple calculation shows that $\delta \cdot \Expect[m] < n_i^{0.6}$ and $\Expect[m] > n_i^{0.9}/2$. Notice that each device in the state $v$ decides which action to perform next independently.
By a Chernoff bound, the probability that $m$ is within $1 \pm \delta$ factor of $\Expect[m]$ is
at least $1 - 2\exp(-\delta^2 \cdot \Expect[m] /3) \geq 1 - 2\exp(-(n_i^{-0.4}/2)^2 \cdot (n_i^{0.9}/2) / 3) > 1 - O(n_i^{-9})$.
Therefore, with such probability, $m$ is in the range $\Expect[m] (1\pm \delta)$, which is within $n_i \cdot p_v \cdot \pp{v}{x} \pm t \cdot n_i^{0.6}$, as desired.
\end{proof}

\begin{lemma}\label{lem-find-ni}
For an  execution of  $\mathcal{A}$ on $n_i$ devices, $\mathcal{P}_{i}$ holds with probability at least $1 -  n_i^{-7}$.
\end{lemma}
\begin{proof}
For the base case, $\mathcal{P}_{i}^{(1)}$ holds with probability 1.
For each $1 < t \leq d_i$, we will show that $\Prob\left[ \mathcal{P}_{i}^{(t)} \ \middle| \ \mathcal{P}_{i}^{(t-1)} \right] = 1 -  O(n_i^{-8})$.
Therefore, by a union bound on all $t \in \{1, \ldots, d_i\}$, we have:
$$
\Prob[\mathcal{P}_{i}] = \Prob\left[\bigcap_{t = 1}^{d_i} \mathcal{P}_{i}^{(t)}\right] \geq 1 - d_i \cdot O(n_i^{-8}) \geq 1 - O(n_i^{-8} \log \log n_i)
\geq 1 -  n_i^{-7}.$$

Let $1 < t \leq d_i$. Suppose that $\mathcal{P}_{i}^{(t-1)}$ holds. This implies that, for each state $v$ in  layer $t-1$ of $\mathcal{T}_{i}$, the number of devices entering $v$ is within $n_i \cdot p_v \pm (t-2) \cdot n_i^{0.6}$.
The statement of Lemma~\ref{lem-aux-1} holds for at most $3^{t-1}$ choices of states $v$ in layer  $t-1$ of $\mathcal{T}_{i}$, and all 3 choices of $x \in \{\send,\listen,\idle\}$, with probability at least
$$1 - 3 \cdot 3^{t-1} \cdot O(n_i^{-9}) \geq 1 -  O(n_i^{-9} \poly \log n_i) \geq 1 -  O(n_i^{-8}).$$
In particular, this implies that, with probability $1 -  O(n_i^{-8})$, at time $t-1$, the number of devices transmitting  is either 0 or greater than 1, which implies that no message is successfully sent.
Therefore, at layer $t$, all devices are confined in states within $\Tideal$.
Let $u$ be the child of $v$ in $\Tideal$ such that the arc $(v,u)$ corresponds to action $x$.
Due to our choices of $n_i$ and $\mathcal{T}_{i}$, if $u \in \mathcal{T}_{i}$, then $p_u = p_v \cdot \pp{v}{x} \geq n_i^{-1/10}$; if $u \notin \mathcal{T}_{i}$, then $p_u = p_v \cdot \pp{v}{x} \leq n_i^{-10}$.
Therefore, in view of the statement of Lemma~\ref{lem-aux-1}, $\mathcal{P}_{i}^{(t)}$ holds with probability at least $1 -  O(n_i^{-8})$.
\end{proof}

\begin{lemma}\label{lem-term-state}
The no-communication tree $\Tideal$ has no terminal state $u$ with $p_u \neq 0$.
\end{lemma}
\begin{proof}
Suppose that $u \in \Tideal$ is a terminal state with $p_u \neq 0$. Then there exists an index $i$ such that for {\em all} $j \geq i$, $u \in \mathcal{T}_{j}$. Among all $\{n_j\}_{j\geq i}$, the decision of  $u$ is a correct estimate of at most one $n_j$. Therefore, the adversary can choose one network size $n_{j'}$ from $\{n_j\}_{j\geq i}$ such that when $\mathcal{A}$ is executed on $n_{j'}$ devices, any device entering $u$ gives a wrong estimate of $n_{j'}$. By Lemma~\ref{lem-find-ni}, with probability $1 -  n_{j'}^{-7} > 1/{n_{j'}}$, there is a device entering $u$, and hence the algorithm fails with probability higher than $1/{n_{j'}}$, a contradiction.
\end{proof}

In the following lemma, we show how to force energy expenditure of devices.

\begin{lemma}\label{lem:force-energy}
Define  $p_{\operatorname{idle}}^{(i)}$ as the maximum probability for a device $s$ that is in a state $u$ in layer $d_i$ of $\mathcal{T}_{i}$ to be \idle{} throughout the time interval $[d_{i}, d_{i+1}-1]$, where the maximum ranges over all states in layer $d_i$ of $\mathcal{T}_{i}$. Then $p_{\operatorname{idle}}^{(i)} < 2^{-d_i}$.
\end{lemma}
\begin{proof}
For a device $s$ to terminate within the time constraint $T(n_i)$, the device $s$ must {\em leave} the tree $\Tideal$ by time $T(n_i) < d_{i+1}$ due to Lemma~\ref{lem-term-state}. Suppose that the device $s$ is currently in a state $u$ in layer $d_i$ of $\mathcal{T}_{i} \subseteq \Tideal$. In order to leave the tree $\Tideal$  by time $T(n_i)$, the device $s$ must  successfully hear some message $m \in \mathcal{M}$  by time $T(n_i)$.
Since $\mathcal{T}_{i} \subseteq \Tideal$, $s$ has not heard any message by time $d_i - 1$, and so at least one unit of energy expenditure in the time interval $[d_{i}, d_{i+1}-1]$ is required.

Recall that if $\mathcal{P}_{i}$ occurs, then all devices are confined in $\mathcal{T}_{i}$ up to layer $d_i$.
If we execute $\mathcal{A}$ on $n_i$ devices, then the probability that the runtime of a device exceeds $T(n_i)$ is at least $\Prob[\mathcal{P}_{i}] \cdot  p_{\operatorname{idle}}^{(i)}$, and so we must have $1/{n_i} \geq \Prob[\mathcal{P}_{i}] \cdot  p_{\operatorname{idle}}^{(i)}$.
By Lemma~\ref{lem-find-ni}, we have $\Prob[\mathcal{P}_{i}] \geq 1 - n_i^{-7} > 1/2$.
Therefore,  $p_{\operatorname{idle}}^{(i)} \leq 1/(n_i \Prob[\mathcal{P}_{i}]) < 2/{n_i} \leq 2 \cdot  2^{-2^{d_i}} < 2^{-d_i}$, as desired.
\end{proof}

We are now in a position to prove the main result of this section.

\begin{lemma}\label{thm-LB-coCD}
For any $i\geq1$, there exists a network size $n$ satisfying $d_i <  n < d_{i+1}$ such that if $\mathcal{A}$ is executed on $n$ devices, for any device $s$, with probability at least $1/2$ the device $s$ spends at least one unit of energy in each of the time intervals $[d_{j}, d_{j+1}-1]$, $1 \leq j \leq i$.
\end{lemma}
\begin{proof}
We select $n = n_i$.
Consider an execution of  $\mathcal{A}$ on $n_i$ devices, and let $s$ be any one of the $n_i$ devices.
Let $j \in \{1, \ldots, i\}$. We claim that, given that $\mathcal{P}_{i}$  holds, with probability $1 - 2 \cdot 2^{-d_j}$ the device $s$ spends at least one unit of energy in  the interval $[d_{j}, d_{j+1}-1]$.
 Then, by a union bound on all $j \in \{1, \ldots, i\}$, the probability that the device $s$ spends at least one unit of energy in each of the intervals $[d_{j}, d_{j+1}-1]$, $1 \leq j \leq i$, is at least $1 - (1-\Prob[\mathcal{P}_{i}]) - 2 \sum_{j=1}^{i} 2^{-d_j}$, which is greater than $1/2$ if $d_1$ is chosen as a sufficiently large constant.

 Next, we prove the above claim. Suppose $\mathcal{P}_{i}$  holds. In view of Lemma~\ref{lem:force-energy}, if $s$ enters a state in layer $d_j$ of $\mathcal{T}_{j}$, then $s$ spends at least one unit of energy in  the time interval $[d_{j}, d_{j+1}-1]$ with probability $1-2^{-d_j}$.
 Thus, all we need to do is to show that the probability that $s$ enters a state in layer $d_j$ of $\mathcal{T}_{j}$ is at least $1 - 2^{-d_j}$.

 Recall that $\mathcal{T}_{j}$ is a subtree of $\mathcal{T}_{i}$. Let $u$ be a state in layer $d_j$ that does not belong to $\mathcal{T}_{j}$. We have $p_u < n_j^{-1/10}$. Since $\mathcal{P}_{i}$  holds, the number of devices entering the state $u$ is at most $n_i \cdot n_j^{-1/10} + (d_j - 1) \cdot n_i^{0.6}$. Since there are at most $3^{d_j}$ states in layer $d_j$ of $\mathcal{T}_{i}$, the proportion of the devices that do not enter a state in layer $d_j$ of $\mathcal{T}_{j}$ is at most
\[
\frac{1}{n_i} \left(n_i \cdot n_j^{-1/10} + (d_j - 1) \cdot n_i^{0.6} \right) \cdot 3^{d_j}
 = \left(n_j^{-1/10} + (d_j - 1) \cdot n_i^{-0.4} \right) \cdot 3^{d_j}
 < 2^{-d_j},
\]
since $n_i \geq n_j \geq 2^{2^{d_j}}$.  
\end{proof}

If it is the case that $T(n) \leq \exp^{(\ell)}(n)$, for some constant $\ell$, where $\exp^{(i)}$ is iterated $i$-fold application of $\exp$,
then it is possible to set the checkpoints such that $\arg \max_i (d_i < n) = \Theta(\log^\ast n)$, and so Lemma~\ref{thm-LB-coCD} implies that the energy cost $\mathcal{A}$ is  $\Omega(\log^\ast n)$. Therefore, we conclude Theorem~\ref{thm:randLB-1} (which is the case of $T(n) = O(\poly (n))$).

\subsection{Lower Bound in the \strong{} Model}
\label{sec:strongcdlb}

In this section we prove Theorem~\ref{thm:randLB-2}.
Let  $\mathcal{A}$ be any $T(n)$ time   algorithm for \ApproximateCounting{} in  \strong{}  with failure probability at most $1/n$.
We  show that the energy cost of $\mathcal{A}$  is $\Omega(\log \log^\ast n)$.

\paragraph{Overview.} Similar to Section~\ref{sec:weakcdlb}, we will construct a sequence of network sizes $\{n_i\}$ with checkpoints $\{d_i\}$ such $d_i < n_i  < d_{i+1}$ and $T(n_i) < d_{i+1}$. Each index $i$ is associated with a truncated decision tree $\mathcal{T}_i$ such that if we execute $\mathcal{A}$ on $n_i$ devices, then the execution history of all devices until time $d_i$ are confined to $\mathcal{T}_i$ with probability  $1-1/\poly(n_i)$.

Suppose that the actual network size $n$ is chosen from the set $S = \{n_1, \ldots, n_k\}$.
The proof in Section~\ref{sec:weakcdlb} says that it costs $\Omega(k)$ energy to estimate $n$, when $n = n_k$.
However,  in the \strong\ model,  the devices are capable of differentiating between \silence{} and \noise, and so they are able to perform a binary search on $S$, which costs only $O(\log k)$ energy to estimate $n$.

The high level idea of our proof of Theorem~\ref{thm:randLB-2} is to demonstrate that this binary search strategy is optimal. We will carefully select a path $P$ in $\Tideal$ reflecting a worst case scenario of the binary search, and we will show that the energy consumption of any device whose execution history follows the path $P$ is $\Omega(\log (\log^\ast n))$.

\paragraph{Basic Setup.} The definitions of the no-communication tree $\Tideal$ and probability estimate $p_u$ are adapted from Section~\ref{sec:weakcdlb}. In the \strong{} model each state in $\Tideal$ has exactly four children, corresponding to all valid combinations of $\{\lambda_S, \lambda_N\}$ and $\{\send, \listen, \idle\}$: $(\send,\lambda_N)$, $(\listen,\lambda_S)$, $(\listen,\lambda_N)$, and $(\idle,\text{N/A})$. Notice that a device transmitting a message never hears \silence{} in the \strong{} model.
The definition of the checkpoints $d_i$ and how we select the network sizes $n_i$ are also the same as in Section~\ref{sec:weakcdlb}.

\paragraph{Truncated Subtrees.}  The subtrees $\{\mathcal{T}_i\}$ are defined differently. For each index $i$, the subtree $\mathcal{T}_i$, along with the sequence $\{m_{i,t}\}_{1 \leq t \leq d_i - 1}$ indicating a likely status (\noise\ or \silence) of the channel at time slot $t$ when $n = n_i$, is constructed layer-by-layer as follows.
\begin{description}
\item[Base Case.] The first layer of $\mathcal{T}_i$ consists of only the initial state.

\item[Inductive Step.]  For each $1 < t \leq d_i$, suppose that layer $t-1$ of $\mathcal{T}_i$ has been defined. If there is at least one state $v$ in layer $t-1$ of $\mathcal{T}_i$ with $p_v \cdot \pp{v}{\send} \geq  {n_i}^{-1/10}$, set $m_{i,t-1} = \lambda_N$; otherwise set $m_{i,t-1} = \lambda_S$.
Let $u$ be a state in layer $t$ that is a child of a state $w$ in $\mathcal{T}_i$. Let $m \in \{\lambda_N, \lambda_S, \text{N/A}\}$ and $x \in \{\send, \listen, \idle\}$ be the channel feedback associated with the arc $(w,u)$. We add $u$ to the layer $t$ of $\mathcal{T}_i$ if  the following two conditions are met: (i) $p_u \geq  {n_i}^{-1/10}$ and (ii) $x = \idle$ or $m = m_{i,t-1}$.
\end{description}
We discuss some properties of $\mathcal{T}_i$.
All states in $\mathcal{T}_i$ are in layers $[1, d_i]$.
Let $w$ be a layer $(t-1)$ state in $\mathcal{T}_i$, and let $u_1$ and $u_2$ be the two children of $v$ corresponding to $(\listen,\lambda_S)$ and $(\listen,\lambda_N)$. Due to the definition of $\mathcal{T}_i$, {\em at most one} of $u_1$ and $u_2$ is in $\mathcal{T}_i$,
and so each state in $\mathcal{T}_i$ has at most three children.
We do \emph{not} have $\mathcal{T}_1 \subseteq \mathcal{T}_2 \subseteq \cdots$ in general.

We define the event $\mathcal{P}_{i}$ in the same way as in Section~\ref{sec:weakcdlb}, but using
the new definition of $\mathcal{T}_i$. We have the following lemma, whose proof is essentially the same as that of Lemma~\ref{lem-find-ni}. The only difference is that we need to show that for each time slot $T$, the designated channel feedback $m_{i,t} \in  \{\lambda_N, \lambda_S\}$ occurs with probability $1 - 1/\poly(n_i)$, given that the event $\mathcal{P}_{i}^{(t)}$ occurs; this can be achieved via a proof similar to that of Lemma~\ref{lem-aux-1}.

\begin{lemma}\label{lem-find-ni-2}
For an  execution of  $\mathcal{A}$ on $n_i$ devices, $\mathcal{P}_{i}$ holds with probability at least $1 -  n_i^{-7}$.
\end{lemma}

A difference between \strong\ and \weak\ is that in the \strong\ model it is possible to have terminal states in $\Tideal$.
However, there is a simple sufficient condition to guarantee that a state $u$ in  $\Tideal$ is not a terminal state.

\begin{lemma}\label{lem-term-state-2}
Let $u$ be any state in both $\mathcal{T}_i$ and $\mathcal{T}_j$ for some $i \neq j$.
Then $u$ is not a terminal state.
\end{lemma}
\begin{proof}
Suppose that $u$ is a terminal state. The decision of  $u$ is a correct estimate of at most one of $\{n_i, n_j\}$.
Without loss of generality, assume that the decision of $u$ is an incorrect estimate of $n_j$.
When $\mathcal{A}$ is executed on $n_{j}$ devices, any device entering $u$ gives a wrong estimate of $n_{j}$. By Lemma~\ref{lem-find-ni-2}, with probability $1 -  n_{j}^{-7} > 1/{n_{j}}$, there is a device entering $u$, and hence the algorithm fails with probability higher than $1/{n_{j}}$, a contradiction.
\end{proof}

Let $k \geq 3$ be an integer. Consider the set $\{n_1, \ldots, n_k\}$. Our goal is to find an index $\hat{i}$ such that, during an execution of $\mathcal{A}$ on $n_{\hat{i}}$ devices, with probability $1-1/\poly(n_{\hat{i}})$, there exists a device that uses $\Omega(\log k)$ energy. This is achieved by constructing a {\em high energy path} $P=(u_1, u_2, \ldots, u_{\hat{t}})$, along with a sequence of sets of {\em active indices} $K_1 \supseteq K_2 \supseteq \ldots \supseteq K_{\hat{t}}$ in such a way that $i \in K_t$ implies $u_t \in \mathcal{T}_i$. The path $P$ is a directed path in the tree $\Tideal$, and $u_t$ belongs to layer $t$, for each $t$. The number $\hat{t}$ will de chosen later. We will later see that any device entering the state $u_t$ is unable to distinguish between $\{n_i\}_{i \in K_t}$.
The path $P$ is selected to contain at least $\Omega(\log k)$ transitions corresponding to  \listen{} or \send. Thus, choosing $\hat{i}$ as any index in $K_{\hat{t}}$ implies $u_{\hat{t}} \in \mathcal{T}_{\hat{i}}$, and so $\hat{t} \leq d_{\hat{i}}$.  Then, Lemma~\ref{lem-find-ni-2} and the definition of $\mathcal{P}_{i}$ imply that in an execution of $\mathcal{A}$ on $n_{\hat{i}}$ devices, with probability $1-n_{\hat{i}}^{-7}$, at least $n_{\hat{i}} \cdot p_{u_{\hat{t}}} - (\hat{t}-1) \cdot n_{\hat{i}}^{0.6} = \Omega(n_{\hat{i}}^{0.9})> 1$ device enters the state $u_{\hat{t}}$ along the path $P$, and any such device uses $\Omega(\log k)$ energy.

One may attempt to construct the path $P$ by a greedy algorithm which iteratively extends the path by choosing the child state $v$ with the highest probability estimate $p_v$. The ``regular update'' in our construction of $P$ is based on this strategy. However, this strategy alone is insufficient to warrant any energy expenditure in $P$.

We briefly discuss how we force energy expenditure.
Recall that (i) $i \in K_t$ implies $u_t \in \mathcal{T}_i$, and (ii) any device entering $u_t$ is unable to distinguish between the network sizes in $\{n_i\}_{i \in K_t}$. Suppose $i \in K_{d_i}$, and let $s$ be any device in the state $u_{d_i}$. The probability that $s$ remains \idle{} throughout the time interval $[d_i, d_{i+1}-1]$ must be {\em small}. The reason is that $s$ needs to learn whether the underlying network size is $n_i$ by the time $T(n_i) < d_{i+1}$. Suppose that $(u_1, \ldots, u_{t})$ have been defined, and we have $t = d_i$ and $i \in K_{d_i}$. Then it is possible to extend $(u_1, \ldots, u_{t})$ in such a way that guarantees one energy expenditure in the time interval $[d_i, d_{i+1}-1]$. This corresponds to the  ``special update'' in our construction of $P$.

\paragraph{Construction of the High Energy Path.} The path $P = (u_1, \ldots, u_{\hat{t}})$ and the sequence $K_1 \supseteq K_2 \supseteq \cdots \supseteq K_{\hat{t}}$ are defined as follows.
We initialize $\tilde{P}=(u_1)$ with $u_1$ being the initial state, and let $K_1=\{1,2,\ldots,k\}$.

\paragraph{Stopping Criterion.}
The following update rules are applied repeatedly to extend the current path $\tilde{P}=(u_1, \ldots, u_t)$ to a longer path $(u_1, \ldots, u_{t'})$ (for some $t' > t$) until the {\em stopping criterion} $|K_{t}| < 4$ is reached. Then, we set $P = \tilde{P}$.  We will later see in the calculation of the shrinking rate of $|K_t|$ in the proof of Lemma~\ref{thm-rand-cd} that the stopping criterion implies $|K_{t''}| \geq 2$ for all $1 \leq t'' \leq \hat{t}$ in the final path ${P}=(u_1, \ldots, u_{\hat{t}})$. By Lemma~\ref{lem-term-state-2}, this implies that all states in $P$ are not terminal states.


\begin{description}
\item [Regular Update.] We apply this rule if $t\neq d_i$ for all $i \in K_{t}$. Let $x^\star \in \{\send,\listen,\idle\}$ be chosen to maximize $\pp{u_t}{x^\star}$. If $x^\star = \idle$, append the  child of $u_t$ that corresponds to performing $x^\star$ at time $t$ to the end of $\tilde{P}$, and set $K_{t+1}=K_t$.
    In what follows, suppose $x^\star \in \{\send,\listen\}$. If $x^\star = \send$, let $m^\star = \lambda_N$. If $x^\star = \listen$, let $m^\star \in \{\lambda_S,\lambda_N\}$ be chosen to maximize the number of indices $j\in K_{t}$ with $m_{j,t}= m^\star$. Append the child of  $u_t$ that corresponds to performing action $x^\star$ and receiving feedback $m^\star$ at time $t$ to the end of $\tilde{P}$, and set $K_{t+1} = \{ j\in K_{t} \ | \ m_{j,t}= m^\star\}$.
\item [Special Update.] We apply this rule if $t = d_i$ for some $i \in K_{t}$. Let $t' \in \{d_i + 1, \ldots, d_{i+1}\}$ and $x^\star \in \{\send,\listen\}$ be chosen to maximize the probability for a device currently in $u_t$ to be \idle{} throughout the time interval $[t,  t'-2]$ and to perform $x^\star$ at time $t'-1$. If $x = \send$, let $m^\star = \lambda_N$. Otherwise, let $m^\star \in \{\lambda_S,\lambda_N\}$ be chosen to maximize the number of indices $j\in K_{t} \setminus \{i\}$ with $m_{j,t'-1}= m^\star$. We let $u_{t'}$ be the unique descendant of $u_t$ resulting from applying $t'-t$ \idle{} actions throughout the time interval $[t,  t'-2]$ and then performing action $x^\star$ and receiving feedback $m^\star$ at time  $t' - 1$.
The path $\tilde{P}=(u_1, \ldots, u_t)$ is extended to $\tilde{P}=(u_1, \ldots, u_{t'})$. For each $t'' \in \{t+1, \ldots, t'-1\}$, we set $K_{t''} = K_t \setminus \{i\}$. For the new endpoint $u_{t'}$, we set $K_{t'} = \{ j\in K_{t}  \setminus \{i\} \ | \ m_{j,t'-1}= m^\star\}$.
\end{description}
See Figure~\ref{fig-path} for an illustration of the update rules. The reason that $i$ must be removed from the set of the active indices in a special update is that $\mathcal{T}_i$ only contains states up to layer $d_i$. In what follows, we prove properties of the high energy path $P = (u_1, \ldots, u_{\hat{t}})$ resulting from the above procedure. For each $t \in \{1, \ldots, \hat{t}\}$, we define the invariant $\mathcal{I}_t$ as $u_t \in \mathcal{T}_i$ for each $i \in K_t$. By Lemma~\ref{lem-term-state-2}, if $\mathcal{I}_t$ holds and $|K_t| \geq 2$, then $u_t$ is not a terminal state.

\begin{figure}[h]
\begin{center}
\includegraphics[width=0.75\textwidth]{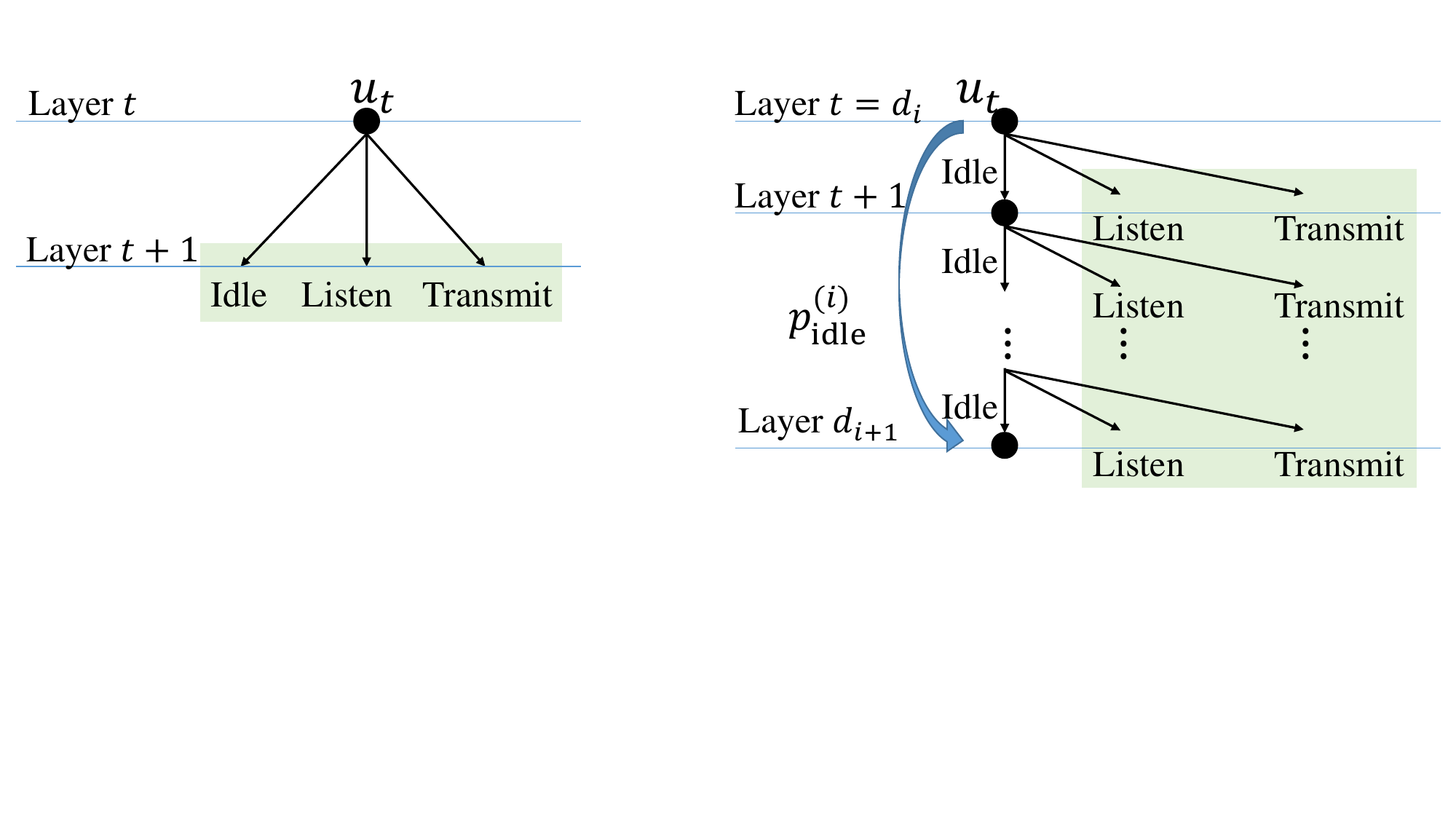}
\end{center}
    \caption{Left: regular update. Right: special update. The shaded region indicates the set of candidate endpoints to extend the current path $\tilde{P}=(u_1, \ldots, u_t)$.
    } \label{fig-path}
\end{figure}

\begin{lemma}\label{lem-idle-prob-2}
Consider a special update for $\tilde{P}=(u_1, \ldots, u_t)$, where $t = d_i$. Let $s$ be a device in $u_t$. Let $p_{\operatorname{idle}}^{(i)}$ be the probability that $s$ remains \idle{} throughout the time interval $[d_i, \ldots, d_{i+1}-1]$. Suppose that $\mathcal{I}_t$ holds. Then $p_{\operatorname{idle}}^{(i)} < 1/2$.
\end{lemma}

\begin{proof}
Since $\mathcal{I}_t$ holds and $i \in K_t$, the state $u_{d_i}$ belongs to $\mathcal{T}_i$.
Lemma~\ref{lem-find-ni-2} implies that with probability $1 - n_i^{-7}$ there is a device $s$ in the state $u_t$ when we execute $\mathcal{A}$ on $n_i$ devices. With probability $p_{\operatorname{idle}}^{(i)}$,  such a device $s$ violates the time constraint $T(n_j)$, since $T(n_j) < d_{j+1}$.
Thus, we must have  $1 / n_i \geq (1 - n_i^{-7})p_{\operatorname{idle}}^{(i)}$, which implies $p_{\operatorname{idle}}^{(i)} < 1/2$.
\end{proof}

\begin{lemma}\label{lem-prob-esti-2}
Let $t \in \{1, \ldots, \hat{t}\}$, and let $n_i \in K_t$.
If  $\mathcal{I}_{\bar{t}}$ holds for all $\bar{t} \in \{1, \ldots, t-1\}$, then $p_{u_t} > n_i^{-1/10}$.
\end{lemma}
\begin{proof}
We first make the following two observations: (i) in a regular update for $\tilde{P}=(u_1, \ldots, u_t)$, we have $p_{u_{t+1}} \geq \frac{p_{u_{t}}}{3}$;
(ii) in a special update for $\tilde{P}=(u_1, \ldots, u_t)$ with $t = d_j$, we have $p_{u_{t'}} \geq \frac{p_{u_t}}{2 \cdot (d_{j+1}-d_j)}  \left(1 - p_{\operatorname{idle}}^{(j)} \right) > \frac{p_{u_{t}}}{4 d_{j+1}}$. Recall that $u_{t'}$ is the new endpoint of $\tilde{P}$ after the special update. Refer to Lemma~\ref{lem-idle-prob-2} for the definition of $p_{\operatorname{idle}}^{(i)} < 1/2$.

Now, fix any $t \in \{1, \ldots, \hat{t}\}$ and $n_i \in K_t$. Notice that $d_i \geq t$ and $i = O(\log^\ast d_i)$.
We write $N_r < t$ and $N_s < i$ to denote the number of regular updates and special updates during the construction of the first $t-1$ states of $P$.
In view of the above two observations, we have:
$$
p_{u_{t}} \geq (1/3)^{N_r} \cdot (1/ 4 d_i)^{N_s}
>  (1/3)^{t} \cdot (1/ 4 d_i)^{i}
 > \left(2^{2^{d_i}}\right)^{-1/10} > n_i^{-1/10},
$$
as desired.
\end{proof}

\begin{lemma}\label{lem-send-r}
Consider a regular update for $\tilde{P}=(u_1, \ldots, u_t)$, and consider any $j \in K_t$.
Suppose that $u_t \in \mathcal{T}_j$, and  $\mathcal{I}_{\bar{t}}$ holds for all $\bar{t} \in \{1, \ldots, t\}$.
if $x^\star = \send$, then $m^\star = \lambda_N = m_{j,t}$.
\end{lemma}
\begin{proof}
By Lemma~\ref{lem-prob-esti-2}, $p_{u_{t+1}} =  p_{u_{t}} \cdot \pp{u_t}{\send} > n_j^{-1/10}$.
Since $u_{t}$, the parent of $u_{t+1}$, is already in $\mathcal{T}_j$, according to the definition of $\mathcal{T}_j$, we must add $u_{t+1}$ to $\mathcal{T}_j$, and set $\lambda_N = m_{j,t}$.
\end{proof}

\begin{lemma}\label{lem-send-s}
Consider a special update that extends $\tilde{P}=(u_1, \ldots, u_t)$ to $(u_1, \ldots, u_{t'})$,
and consider any $j \in K_{t'}$.
Suppose that  $u_{t'-1} \in \mathcal{T}_j$, and $\mathcal{I}_{\bar{t}}$ holds for all $\bar{t} \in \{1, \ldots, t'-1\}$.
If $x^\star = \send$, then $m^\star = \lambda_N = m_{j,t'-1}$.
\end{lemma}
\begin{proof}
The proof is the same as that of Lemma~\ref{lem-send-r}.
By Lemma~\ref{lem-prob-esti-2}, $p_{u_{t'}} =  p_{u_{t'-1}} \cdot \pp{u_t}{\send} > n_j^{-1/10}$.
Since $u_{t'-1}$, the parent of $u_{t'}$, is already in $\mathcal{T}_j$, according to the definition of $\mathcal{T}_j$, we must add $u_{t'}$ to $\mathcal{T}_j$, and set $\lambda_N = m_{j,t'-1}$.
\end{proof}

\begin{lemma}\label{lem-prob-LB}
For each $t \in \{1, \ldots, \hat{t}\}$, $\mathcal{I}_t$ holds.
\end{lemma}
\begin{proof}
For the base case, $\mathcal{I}_1$ holds trivially.
Assume that $\mathcal{I}_{\bar{t}}$ holds for all $\bar{t} \in \{1, \ldots, t-1 \}$, we prove that $\mathcal{I}_t$ holds.
For any $j \in K_t$, we show that $u_t \in \mathcal{T}_j$.

Suppose that $u_t$ is resulting from applying action $x$ and hearing the channel feedback $m$.
By Lemma~\ref{lem-prob-esti-2}, $p_{u_{t}} =  p_{u_{t-1}} \cdot \pp{u_t}{x} > n_j^{-1/10}$.
Since $K_{t} \subseteq K_{t-1}$, by induction hypothesis, $u_{t-1} \in \mathcal{T}_j$.
In what follows, we do a case analysis for all choices of $x \in \{\send, \listen, \idle\}$.

If $x = \idle$, then $u_{t}$ must be in $\mathcal{T}_j$, regardless of the choice of $m_{j, t-1}$, according to the definition of $\mathcal{T}_j$.
If $x = \listen$, then according to the construction of $P$, we have $m = m_{j, t-1}$, and so $u_{t}$ is in $\mathcal{T}_j$.
If $x = \send$, we have $m = \lambda_N$ by the construction of $P$, and $m = \lambda_N = m_{j, t-1}$ due to Lemma~\ref{lem-send-r} and Lemma~\ref{lem-send-s}, and so $u_{t}$ is in $\mathcal{T}_j$.
\end{proof}

We are now in a position to prove the main result of this section.

\begin{lemma}\label{thm-rand-cd}
For any positive integer $k$, there is a network size $n$ satisfying $d_1 \leq n \leq d_{k+1}$ such that in an execution of $\mathcal{A}$  on $n$ devices, with probability at least $1 - n^{-7}$, there is a device that performs $\Omega(\log k)$ \listen{} steps.
\end{lemma}
\begin{proof}
First, we bound the shrink rate of the size of active indices $K_t$.
Consider a regular update for $\tilde{P}=(u_1, \ldots, u_t)$. If $x^\star = \idle$, then $K_{t+1} = K_t$.
If  $x^\star = \send$, then we also have $K_{t+1} = \{ j \in K_{t} \ | \ m_{j,t}= m^\star\} = K_t$ in view of Lemma~\ref{lem-send-r}.
If  $x^\star = \listen$, then our choice of $m^\star$ in the regular update implies $|K_{t+1}| \geq |K_t| / 2$.
Next, consider a special update that extends $\tilde{P}=(u_1, \ldots, u_t)$ to $(u_1, \ldots, u_{t'})$.
Similarly, if  $x^\star \in \{\idle, \send\}$, then $K_{t'} = K_t \setminus \{i\}$, where $i$ is the index such that $t = d_i$; see Lemma~\ref{lem-send-s}.
For the case of $x^\star = \listen$,  our choice of $m^\star$ in the special update implies $|K_{t'}| \geq  \left(|K_t| - 1 \right) / 2$.
Therefore, any device whose execution history following in the path $P=(u_1, \ldots, u_{\hat{t}})$ performs $\Omega(\log |K_1| - \log |K_{\hat{t}}|)$ \listen{} steps.

The stopping criterion,
together with our calculation of the shrinking rate of $|K_t|$, implies that $|K_{\hat{t}}| \geq 2$.
We let $\hat{i}$ be any element in $K_{\hat{t}}$, and set $n = n_{\hat{i}}$.
By Lemma~\ref{lem-prob-LB},  $u_{\hat{t}} \in \mathcal{T}_{\hat{i}}$.
Then, Lemma~\ref{lem-find-ni-2} implies that in an execution of $\mathcal{A}$ on $n_{\hat{i}}$ devices, with probability $1-n_{\hat{i}}^{-7}$, at least $n_{\hat{i}} \cdot p_{u_{\hat{t}}} - (\hat{t}-1) \cdot n_{\hat{i}}^{0.6} = \Omega(n_{\hat{i}}^{0.9})> 1$ devices enter the state $u_{\hat{t}}$ along the path $P$, and any such device  performs $\Omega(\log |K_1| - \log |K_{\hat{t}}|) = \Omega(\log k)$ \listen{} steps.
\end{proof}

Similarly, so long as $T(n) \leq \exp^{(\ell)}(n)$, for some constant $\ell$, where $\exp^{(i)}$ is iterated $i$-fold application of $\exp$, it is possible to set the checkpoints such that $k = \Theta( \log^\ast (d_{k+1}))$, and so Lemma~\ref{thm-rand-cd} implies that the energy cost $\mathcal{A}$ is  $\Omega(\log \log^\ast n)$. Therefore, we conclude Theorem~\ref{thm:randLB-2}.

\subsection{Other Problems}
\label{sec:otherlb}
In this section we discuss lower bounds of other problems.

\paragraph{Successful Communication.}
We demonstrate how our lower bounds proofs  can be adapted to the class $\mathcal{C}$ of all problems that require each
device to perform at least one successful communication before it terminates.
In particular, this includes  \LeaderElection{} and the contention resolution problem studied in~\cite{BenderKPY16}.
Notice that \ApproximateCounting, in general, does not require each device to perform a successful communication before it terminates.

Consider the \weak{} model. Let $\mathcal{A}$ be a polynomial time algorithm, and a device in an execution of $\mathcal{A}$ must perform at least one successful communication before it terminates. Let the runtime of $\mathcal{A}$ be $T(n)$. Let $n = n_i$ for some $i$.
Consider a device $s$ in an execution of $\mathcal{A}$ on $n_i$ devices.
Let $t_{\operatorname{suc}}$ be time of the first successful communication of $s$.
Then $t_{\operatorname{suc}} \leq  T(n_i) < d_{i+1}$ with probability $1 - 1/n_i$.
By Lemma~\ref{lem-find-ni}, with probability $1 - n_i^{-7}$, by time $d_i$ all devices are confined in $\mathcal{T}_i \subseteq \Tideal$ and no successful communication occurs throughout the time interval $[1, d_i - 1]$.
Thus, $t_{\operatorname{suc}} \geq d_i$ with probability $1 - n_i^{-7}$.
Since $t_{\operatorname{suc}}$ is within $[d_i, d_{i+1})$ with probability $1 - 1/\poly(n_i)$, this number can be seen as a very loose estimate of the network size $n = n_i$, but this estimate is already good enough for the device $s$ to distinguish $n = n_i$ from other candidate network sizes in $\{n_j\}$. Since we only consider the set of network sizes $\{n_j\}$ in our proof for Theorem~\ref{thm:randLB-1}, the proof applies to $\mathcal{A}$. For the same reason, Theorem~\ref{thm:randLB-2} also applies to all problems in the class $\mathcal{C}$ in \strong.

\paragraph{Loneliness Detection.}
We consider the  loneliness detection problem whose goal is to distinguish between $n=1$ and $n > 1$; see~\cite{GhaffariN16,GhaffariLS12}.
We show that this problem is impossible to solve in \nocd.
Intuitively, in \nocd, a transmitter cannot simultaneously listen to the channel, and so a device  never receives any feedback from the channel if $n=1$. However, when $n$ is large enough relative to $t$, with high probability a device also does not hear any message in the first $t$ time slots. It seems hopeless to have an algorithm that detects loneliness.

Let $T(n)$ be any time function.
Let $\mathcal{A}$ be any algorithm in \nocd\  that accomplishes the following. If $n > 1$, with probability at least $1 - 1 / n$,  all devices terminate by time $T(n)$ and output ``$n > 1$''. If $n = 1$, then the only participating device $s$ terminates by time $t$ and outputs ``$n = 1$'' with probability $p$. We show that either $t = \infty$ or $p = 0$.

We simulate $\mathcal{A}$ in \weak\ and apply the analysis in Section~\ref{sec:weakcdlb}. 
Recall that in \nocd\ a transmitter cannot simultaneously listen to the channel, and so
for each terminal state $u \in \mathcal{T} \setminus \Tideal$ such that the path $P$ leading to $u$ does not involve successfully listening to a message,  the output of $u$ is identical to some state $u' \in \Tideal$ (which results from changing each successful transmission to a failed transmission in the execution history).

For each state $u \in \Tideal$, there exists an index $i$ such that $u \in \mathcal{T}_i$.
 By Lemma~\ref{lem-find-ni}, in an execution of $\mathcal{A}$ on $n_i$ devices, with probability $1 - n_i^{-7}$, there is at least one device entering the state $u$. Thus, no state in $\Tideal$ is a terminal state with output ``$n = 1$''. However, in \nocd\ with $n=1$ there is no means for a listener to receive a message, and so we must have either $t = \infty$ or $p = 0$.


\newcommand{\bottomup}{\textsf{SimpleCensus}}
\newcommand{\legal}{centralized}
\newcommand{\merge}{\textsf{Merge}}
\newcommand{\phase}{\textsf{DetLE}}

\section{Deterministic Upper Bound}\label{sct:detalg}

In this section we present an optimal deterministic algorithm for \Census{} in \weak{} that simultaneously matches the $\Omega(\log \log N)$ energy lower bound of Theorem~\ref{thm:det_lb} and the $\Omega(N)$ time lower bound of~\cite[Theorem~1.6]{JurdzinskiKZ02}.  Notice that any  \Census{} algorithm also solves \LeaderElection.

\begin{theorem}\label{thm:detUB}
There exists a deterministic \weak{} algorithm
that solves  \Census{}
in $O(N)$ time with energy $O(\log \log N)$.
\end{theorem}


Our algorithm is inspired by an energy-sharing technique introduced~\cite{JurdzinskiKZ02}, which is based on the concept of groups.
We call an ID {\em active} if there is a device of such an ID; we also write $s$ to denote the device of ID $s$.

A {\em group} $G$ is a set of active IDs meeting the following criteria.
Each device belongs to at most one group.
Let $G = (s_1, \ldots, s_k)$ be the members of $G$, listed in increasing order by ID.
The {\em rank} of a device $s_i \in G$ is defined as $i$.
We assume each group $G$ has a unique \emph{group ID}.
Similarly, we say that a group ID $x$ is {\em active} if there is a group $G$ whose ID is $x$.
Each group $G$ has a device $s \in G$ that serves as the {\em representative} of $G$.
We allow the representative of a group to be changed over time.
Each device $s \in G$ knows (i) the group ID of $G$, (ii) the current representative of $G$, and (iii) the list of all IDs in $G$.




\subsection{A Simple Census Algorithm}
In this section we show how to use groups to distribute energy costs to devices.
Consider the following setting. All devices are partitioned into groups whose IDs are within the range $\{1, \ldots, \hat{N}\}$, and each group has size at least $g$. We present a simple \nocd\ deterministic algorithm $\bottomup(\hat{N},g)$ that elects a leader group $G^\star$ such that the representative of $G^\star$ knows the list of IDs of all devices. The algorithm $\bottomup(\hat{N},g)$ ends when the representative of $G^\star$ announces the list of IDs of all devices; the last time slot of $\bottomup(\hat{N},g)$ is called the ``announcement time slot.''
The algorithm $\bottomup(\hat{N},g)$ is executed recursively, as follows.

\paragraph{Base Case.} If $\hat{N}=1$, then there is only one group $G$ which contains all devices. By definition, each device in $G$ already knows the list of IDs of all devices in $G$.
We set $G^\star = G$, and let the representative of $G$ announce the list of IDs of all devices in $G$ at the announcement time slot.

\paragraph{Inductive Step.} Assume $\hat{N}>1$. The algorithm invokes two recursive calls.
For each group $G$, if the rank of the current representative of $G$ is $i$, then the representative of $G$ in a recursive call will be the device of rank $i+1$ (or 1 if $i = |G|$).
The two recursive calls are made on the two halves
of the ID space, $S_1 = \{1, \ldots, \lceil \hat{N}/2 \rceil\}$ and $S_2 = \{\lceil \hat{N}/2 \rceil + 1, \ldots, \hat{N}\}$.
The representative $s$ of each group $G$ listens to the announcement time slots of the two recursive calls; after that, $s$ learns the list of IDs of all devices.
The leader group $G^\star$ is selected as the one that contains the device of the smallest ID.
Then we let the representative of $G^\star$ announce the list of IDs of all devices at the announcement time slot.

\medskip

It is straightforward to see that the algorithm $\bottomup(\hat{N},g)$ takes $O(\hat{N})$ time and uses $O(\lceil \frac{\log \hat{N}}{g} \rceil)$ energy. We summarize the result as a lemma.

\begin{lemma}
\label{thm_simplele}
Suppose that all devices are partitioned into groups whose IDs are within the range $\{1, \ldots, \hat{N}\}$, and each group has size at least $g$.
The algorithm $\bottomup(\hat{N},g)$ elects a leader group $G^\star$ such that the representative of $G^\star$ knows the list of IDs of all devices. The algorithm  takes $O(\hat{N})$ time and uses $O(\lceil \frac{\log \hat{N}}{g} \rceil)$ energy.
\end{lemma}

\subsection{An Optimal Census Algorithm}\label{sct:detle}
In this section we prove Theorem~\ref{thm:detUB}.
Without loss of generality, we assume that $\log N$ is an integer. Our algorithm consists of $O(\log \log N)$ {phases}. All devices participate initially and may drop out in the middle of the algorithm. We maintain the following invariant $\mathcal{I}_i$ at the beginning of the $i$th phase.

\paragraph{Invariant $\mathcal{I}_i$.} (i) All participating devices are partitioned into groups of size exactly $2^{i-1}$ in the group ID space $\{1, \ldots, N\}$. (ii) For each device that drops out during the first $i-1$ phases, its ID is remembered by the representative of at least one group.


\paragraph{Termination.}  There are two different outcomes of our algorithm. For each index $i$, the algorithm terminates at the beginning of the $i$th phase if either (i) there is only one group remaining, or (ii) $i = \log \log N + 1$.

Suppose that at the beginning of the $i$th phase there is only one group $G$ remaining; then the algorithm is terminated with the representative of $G$ knowing the list of IDs of all devices.
Suppose that more than one group remains at the beginning of the $(\log \log N + 1)$th phase; as the groups that survive until this moment have size $\log N$, we can apply $\bottomup(N, \log N)$ to solve \Census\ in $O(N)$ time with $O(1)$ energy cost.

\paragraph{Overview.} At the beginning of the first phase, each device $s$ forms a singleton group $G = \{s\}$, and so the invariant  $\mathcal{I}_1$ is trivially met. Throughout the algorithm, the group ID of a group $G$ is always defined as the minimum ID of the devices in $G$.

During the $i$th phase, each group attempts to find another group to merge into a group with size $2^{i}$. Each group $G$ that is not merged drops out, and the list of IDs in $G$ is remembered by the representative of at least one other group $G'$ that does not drop out.
In what follows, we describe the algorithm for the $i$th phase.  At the beginning of  the $i$th phase, it is guaranteed that the invariant  $\mathcal{I}_i$ is met. We also assume that the number of groups is at least 2, since otherwise the terminating condition is met.

\paragraph{Step~1---Merging Groups.} The first step of the $i$th phase consists of the procedure $\phase(N)$, which costs $O(N)$ time and $O(\log \log N)$ energy.  For each device $s$, we write $I(s)$ to denote the set of all IDs that the device $s$ has heard during the first $i-1$ phases (including the ID of $s$). Notice that $I(s) \supseteq G$ if $s$ belongs to the group $G$.
The procedure $\phase(\hat{N})$ is defined recursively as follows.

\paragraph{Base Case.} Suppose that the group ID space $S$ has size $\hat{N} = 2$, and there are exactly two groups $G_1$ and $G_2$. Using two time slots, the representatives $s_1$ and $s_2$ of the two groups exchange the information $I(s_1)$ and $I(s_2)$, and then the two groups are merged.

\paragraph{Inductive Step.} Suppose that the group ID space $S$ has size $\hat{N} > 2$, and there are at least two groups. Uniformly divide the group ID space $S = \{1, \ldots, \hat{N}\}$ into $N' = \lceil\sqrt{\hat{N}}\rceil$ intervals $S_1, \ldots, S_{N'}$, and each of them has size at most $N' = \lceil\sqrt{\hat{N}}\rceil$.

For each $j\in \{1, \ldots, N' \}$, let $z_j$ be the number of groups whose ID is within $S_j$.
In the \weak\ model, testing whether $z_j = 1$ can be done by letting the representatives of all groups
whose ID are in $S_j$ speak simultaneously.
If $z_j \neq 1$, invoke a recursive call to $\phase(N')$ on the group ID space $S_j$;  the recursive call is vacuous if $z = 0$.

Let $\mathcal{G}$ be the set of all groups that do not participate in the above recursive calls. That is, $G \in \mathcal{G}$ if the ID of $G$ belongs to an interval $S_j$ with $z_j = 1$.
In the \weak\ model, we can check whether $|\mathcal{G}| = 1$ in one time slot by letting the
representatives of groups in $\mathcal{G}$ speak simultaneously.
For the case of $|\mathcal{G}|  \neq 1$, we invoke a recursive call to $\phase(N')$ on $\mathcal{G}$,
where the ID space is $S' = \{1, \ldots, N'\}$ and the group ID of the group from $S_j$ is $j$.
For the case of $|\mathcal{G}| = 1$, we allocate one time slot to let the representative $s$ of the unique group
$G \in \mathcal{G}$ announce $I(s)$ to the representatives of all other groups, and then the group
$G$ decides to drop out from the algorithm.

\paragraph{Analysis.} By the end of $\phase(\hat{N})$, for each group $G$ whose representative is $s$, we have (i) $G$ is merged with some other group $G'$, or (ii) $G$ drops out, and $I(s)$ is remembered by the representative $s'$ of some other group $G'$.

Let $E(\hat{N})$ and $T(\hat{N})$ denote the energy complexity and the time complexity of $\phase(\hat{N})$ on group ID space of size $\hat{N}$.
We have $T(2)=E(2)=O(1)$ and
\begin{align*}
E(\hat{N}) &= E(\lceil\sqrt{\hat{N}}\rceil) + O(1)\\
T(\hat{N}) &= (\lceil\sqrt{\hat{N}}\rceil + 1) \cdot T(\lceil\sqrt{\hat{N}}\rceil) + O(\lceil\sqrt{\hat{N}}\rceil).
\end{align*}
It is straightforward to show that $E(\hat{N}) = O(\log\log\hat{N})$ and $T(\hat{N}) = O(\hat{N})$.

\paragraph{Step~2---Disseminating Information and Electing New Representatives.} Notice that only the representatives of the groups participate in Step~1. Let $G$ be a group whose representative is $s$. If $G$ is merged with some other group $G'$ whose representative is $s'$, then we need all members in $G$ to know $I(s')$ and the list of members in $G'$. If $G$ decides to drop out from the algorithm, then we need all members in $G$ to know about this information. We allocate $N$ time slots for the representatives to communicate with other group members. The energy cost for the information dissemination is $O(1)$ per device.

To save energy, we need each device to serve as a representative for not too many phases. For the first phase, each device $s$ inevitably serves as the representative of its group $G = \{s\}$. For $i > 1$, for each group $G$ participating in the $i$th phase, we let the device of rank $2^{i-1}$ be the representative of $G$ during the $i$th phase. It is straightforward to verify that each device serves as a representative for at most two phases throughout the entire algorithm.




\paragraph{Time and Energy Complexity.}
We analyze the runtime and the energy cost of the entire algorithm.
The energy cost for a device $s$ in one phase is $O(\log \log N)$ if $s$ serves as a representative, and is $O(1)$ otherwise.
There are $O(\log \log N)$ phases, and each device serves as a representative for no more than two phases throughout the algorithm.  Therefore, the  energy cost of the algorithm is $O(\log \log N)$ per device.
The runtime of the algorithm is $O(N \log  \log N)$, since each phase takes $O(N)$ time.
The runtime can be further reduced to $O(N)$ by doing the following preprocessing step.
Uniformly divide the ID space into $\frac{N}{\log \log N}$ intervals, and call
$\bottomup(\log \log N, 1)$ on each interval.
This takes   $O(N)$ time and $O(\log \log \log N)=o(\log \log N)$ energy.
After the preprocessing, each interval has a leader that knows the list of IDs of all devices in the interval.
We only let the leaders of the $\frac{N}{\log  \log N}$ intervals participate in our algorithm. This reduces the size of the ID space from $N$ to $N' = \frac{N}{\log \log N}$, and so the runtime is improved to $O(N' \log  \log N') = O(N)$.

\newcommand{\densealgo}{\textsf{DenseAlgo}}
\newcommand{\denseinit}{\textsf{DenseInit}}
\newcommand{\groupid}{\textsf{GID}}
\newcommand{\roundid}{\textsf{RID}}
\newcommand{\startid}{\textsf{SID}}
\newcommand{\idspacesize}{\textsf{Size}}
\newcommand{\termin}{\textsc{Terminated}}

\section{Deterministic Upper Bound for Dense Instances}\label{sec:detle_dense}
In this section we present a deterministic algorithm that solves  \Census~with inverse Ackermann energy cost when the input is {\em dense} in the ID space, i.e., the number of devices $n$ is at least $c \cdot N$ for a fixed constant $c > 0$. This improves upon a prior work of Jurdzinski et al.~\cite{JurdzinskiKZ02} which uses $O(\log^\ast N)$ energy.
For any two positive integers $i$ and $j$, we define the two functions $a_i(j)$ and $b_i(j)$ as follows.

\begin{center}
\begin{tabular} { c c c  }
$
a_i(j) =
\begin{cases}
j^9 &\text{if $i = 1$}\\
a_{i-1}^{(j)}(j^8)&\text{if $i > 1$}
\end{cases}
$
&
\hspace{1cm}
&
$
b_i(j) =
\begin{cases}
2^j &\text{if $i = 1$}\\
2^j \prod_{r = 0}^{j-1} b_{i-1} \left (a_{i-1}^{(r)}(j^8)\right) &\text{if $i > 1$}
\end{cases}
$
\end{tabular}
\end{center}
The notation $f^{(r)}$ is iterated $r$-folded application of $f$, which is defined as $f^{(0)}(x) = x$ and $f^{(r)}(x) = f\left(f^{(r-1)}(x)\right)$.
We define the inverse Ackermann function $\alpha(N)$ to be the minimum number $i$ such that $b_i(55) \geq N$.
This is not the standard definition of $\alpha$, but it is identical to any other definition from the literature, up to $\pm O(1)$.
The goal of this section is to prove the following theorem.

\begin{theorem}\label{thm:dense}
Suppose the number of devices $n$ is at least $c \cdot N$ for a fixed constant $c > 0$.
There is  a deterministic \nocd{} algorithm that solves  \Census\ in time $O(N)$ with energy cost $O(\alpha(N))$.
\end{theorem}


Our algorithm is based on the recursive subroutine $\densealgo_i(\hat{N}, j)$, which is capable of merging groups into fewer and larger ones using very little energy. The parameter $i$ is a positive integer indicating the height of the recursion. The parameter $\hat{N}$ is an upper bound on the size of the group ID space; for technical reasons we allow $\hat{N} \geq 1$ to be a fractional number. The parameter $j$ is a lower bound on the group size; we assume $j \geq 55$.
The precise specification of  $\densealgo_i(\hat{N}, j)$ is as follows.


\begin{description}
\item [Input.] Prior to the execution of $\densealgo_i(\hat{N}, j)$, the set of all devices are partitioned into groups. The size of the group ID space is at most $\hat{N}$. Each group has size at least $j$. The total number of groups is at least $\hat{N} / \log j$.

\item [Output.] Some devices drop out during the execution of $\densealgo_i(\hat{N}, j)$. The fraction of the devices that drop out is at most $2/j$ of all devices. After the execution of $\densealgo_i(\hat{N}, j)$, the remaining devices form new groups. The size of the group ID space is at most $N' = \max\{1,  \hat{N} / b_i(j) \}$.
     Each group has size at least $a_i(j)$. The total number of groups is at least $N' / {\log a_i(j)}$.
    We allocate  $N'$  ``announcement time slots'' at the end of $\densealgo_i(\hat{N}, j)$.  At the $k$th announcement time slot, the representative of the group $G$ of ID $k$ announces the list of all members of $G$.

    We have more stringent requirements for the case of $i = 1$:  (i) the fraction of the devices that drop out is at most $1/j$ of all devices, and (ii) the total number of groups is at least $N' / {8 \log j}$.

\item [Complexity.] The procedure $\densealgo_i(\hat{N}, j)$ takes $O(\hat{N})$ time and consumes $O(i)$ energy per device.
\end{description}

In Sections~\ref{densealgo-base} and~\ref{densealgo-inductive}, we present and analyze the subroutine $\densealgo_{i}(\hat{N}, j)$.
We have the following auxiliary lemma.

\begin{lemma}\label{lem:census-reduce}
Suppose the ID space of devices is $S = \{1, \ldots, N\}$, and there is a group $G$ of size $\epsilon N$.
Then there is a deterministic algorithm that solves \Census\ in $O(N)$ time and $O(1 / \epsilon)$ energy.
\end{lemma}

\begin{proof}
The algorithm is as follows.
Partition the ID space $S$ into $k = \epsilon N = |G|$ intervals $S_1, \ldots, S_k$, where each interval has size at most $\ceil{1/\epsilon}$.
Let $s_i$ be the device in $G$ that is of rank $i$, and let $L_i$ be the list of IDs of all devices in $S_i$.
For each $1 \leq i \leq k$, we let $s_i$ learn $L_i$ by having $s_i$ listens for $|S_i| = O(1 / \epsilon)$ time slots, where each device in $S_i$ transmits once.
Next, we allocate $k-1$ time slots to do the following. For $i = 1$ to $k-1$, Let $s_i$ transmit $\bigcup_{j=1}^{i} L(s_j)$ and $s_{i+1}$ listen.  After that, $s_k$ knows the list of IDs of all devices, and we let $s_k$ announce the list while all other devices listen to the channel.
\end{proof}

We are now in a position to prove Theorem~\ref{thm:dense}.
The proof is based on Lemma~\ref{lem:census-reduce} and the procedure  $\densealgo_i(\hat{N}, j)$.
Recall that the number of devices is promised to be at least $cN$.
We choose $j^\ast = \max\{55, 2^{\ceil{1/c}}\}$ and let $i^\ast$ be the minimum number $i$ such that $N / b_i(j^\ast) \leq 1$.
In order to artificially satisfy the input invariant, we imagine that each device simulates a group of $\log j^\ast = O(1)$ devices.
Notice that the group ID space is $\{1, \ldots, N\}$, and the total number of groups is $n \geq c N \geq N / \log j^\ast$.
Thus, the requirement for executing $\densealgo_{i^\ast}({N}, j^\ast)$ is met.
We execute $\densealgo_{i^\ast}({N}, j^\ast)$, which costs $O(N)$ time and $O(i^\ast) = O(\alpha(N))$ energy.
During the execution, at most $2/j^\ast$ fraction of devices drop out.
All remaining devices form $1 = \max \{1,  {N} / b_{i^\ast}(j^\ast) \}$ group.
After that, we can solve \Census\  by the algorithm of Lemma~\ref{lem:census-reduce} using additional $O(N)$ time and $O(1)$ energy.

\subsection{Base Case}
\label{densealgo-base}
In this section we present the base case $i = 1$ of the subroutine
 $\densealgo_i(\hat{N}, j)$ and show that it meets the required specification.
At the beginning, all devices are organized into groups of size at least $j$, and the group ID space $S$ has size at most $\hat{N}$.
We partition the group ID space $S$ into $k = \lceil \hat{N} / 2^{j+1} \rceil$ intervals $S_1, \ldots, S_k$ such that each interval has size at most $2^{j+1}$.
For each interval $S_l$, we run $\bottomup(|S_l|,j)$ to merge all groups in the interval $S_l$ into a single group $G$, and let $l$ be the ID of $G$.
After that, for each group $G$ of size less than $j^9$, all devices in $G$ drop out.
The execution of $\bottomup(|S_l|,j)$ takes $O(|S_l|)$ time and $O(1)$ energy. Thus, algorithm $\densealgo_1(\hat{N}, j)$ costs $O(\hat{N})$ time and $O(1)$ energy. It is clear that each group in the output has size at least $j^9 = a_1(j)$. We show that the output meets the remaining requirements.

\begin{description}
\item[Size of Group ID Space.] The size of the output group ID space is $k = \lceil \hat{N} / 2^{j+1} \rceil$. For the case of $\hat{N} / 2^{j+1} \leq 1$, we have $k = 1$. For the case of $\hat{N} / 2^{j+1} > 1$, we have $k = \lceil \hat{N} / 2^{j+1} \rceil < \hat{N} / 2^{j} = \hat{N} / b_1(j)$. Thus, the size of the group ID space $k$ is always upper bounded by $N' = \max\{1,  \hat{N} / b_1(j) \}$.
\item[Proportion of Terminated Devices.] The number of devices that are terminated is at most $z_{\text{term}} =  (j^9 - 1)k < j^9 \hat{N} / 2^{j}$. The total number of devices is at least $z_{\text{init}} = j \hat{N} / \log j$. Thus, the proportion of the terminated devices is at most $f = z_{\text{term}} / z_{\text{init}} = \frac{j^8  \log j}{2^{j}}$. As long as $j \geq 55$, we have $f < 1/j$.
\item[Number of Groups.] The total number of devices is at least $z_{\text{init}}  = j \hat{N} / \log j$. The size of each output group is at most $j 2^{j+1}$. Since the proportion of the terminated devices is at most $1/j \leq 1/55 < 1/2$, the number of output groups is at least
    $$z_{\text{out}} = \max\{1, \lfloor (z_{\text{init}}/2) / \left(j 2^{j+1}\right) \rfloor\} =
     \max\{1, \lfloor z_{\text{init}} / \left(j 2^{j+2}\right) \rfloor\}.$$

    We show that the inequality $z_{\text{out}} \geq   \max\{1, \hat{N} / (2^{j} 8 \log j)\} \geq N' / {8 \log j}$ holds. For the case of $z_{\text{out}} = 1$, the inequality is already met. If $z_{\text{out}} > 1$, we have
    $z_{\text{out}} = \lfloor z_{\text{init}} / \left(j 2^{j+2}\right) \rfloor
    \geq  z_{\text{init}} / \left(j 2^{j+3}\right) = \hat{N} / (2^{j} 8 \log j)$, as desired.
\end{description}

\subsection{Inductive Step}\label{densealgo-inductive}

In this section we consider the case of $i > 1$.
The algorithm $\densealgo_i(\hat{N}, j)$ begins with
an initialization step, which increases the group size from $j$ to $j^9$ by executing $\densealgo_1(\hat{N}, j)$.
After that, it recursively invokes $\densealgo_{i-1}(X_r, Y_r)$, for $r$ from $1$ to $j^\star$, where $j^\star$
and the sequences $(X_r)_{r \in [j^\star]}$ and $(Y_r)_{r \in [j^\star]}$ will be determined.
Each device participates in the initialization step and exactly one recursive call to $\densealgo_{i-1}$, so
the energy cost per device is $O(1) + O(i-1) = O(i)$.

After the initialization step, each group $G$ has size $j^9$.  For each group $G$, we extract $j^\star$ subgroups $G_1, G_2, \ldots, G_{j^\star}$ from the members of $G$, each
with size exactly $j^8$ (we will later see that $j^\star \leq j$).
The subgroup $G_r$ is responsible for representing $G$ in the $r$th recursive call $\densealgo_{i-1}(X_r, Y_r)$.
For $1 \le r < j^\star$, as $G_{r}$ and $G_{r + 1}$ have the same size, we set up a bijection $\phi_r: G_r \to G_{r + 1}$.
For each device $s$ in the $r$th subgroup $G_r$,
after $s$ finishes the $r$th recursive call, if $s$ has not dropped out yet,
$\phi_r(s)$ continues to play the role of $s$ in the $(r + 1)$th recursive call.
$\phi_r(s)$ learns all information known to $s$ by listening to an announcement time slot of the $r$th recursive call.

\paragraph{Parameters of Recursive Calls.}
If $\hat{N}/2^j < 2$, then only one group remains after the initialization step $\densealgo_1(\hat{N}, j)$, and so we are already done without doing any more recursive calls. In what follows, we assume $\hat{N}/2^j \geq 2$.
The two sequences $(X_r)_{r \in [j^\star]}$ and  $(Y_r)_{r \in [j^\star]}$ are defined as follows.
We choose $j^\star$ as $\min\{j, \arg \min_r (X_r < 2)\}$.\footnote{We will later see that $X_r$ represents the output group ID space of the $r$th recursive call. If $X_r < 2$ for some $r < j$, then we can terminate after the $r$th recursive call.}

\begin{center}
\begin{tabular} { c c c  }
$
X_r =
\begin{cases}
\hat{N}/2^j &\text{if $r = 1$}\\
X_{r-1} / b_{i-1}(Y_{r-1}) &\text{if $r > 1$}
\end{cases}
$
&
\hspace{1cm}
&
$
Y_r =
\begin{cases}
j^8 &\text{if $r = 1$}\\
a_{i-1}(Y_{r-1}) &\text{if $r > 1$}
\end{cases}
$
\end{tabular}
\end{center}

We verify that the requirement of executing the $r$th recursive call is met, for each $1 \leq r \leq j^\star$.
\begin{description}
\item[Base Case.] For $r = 1$, we show that the requirement of $\densealgo_{i-1}(X_1, Y_1)$ is met after the initialization step  $\densealgo_1(\hat{N}, j)$: (i)  the number of groups is at least $\frac{\hat{N}/2^j}{8 \log j} = X_1 / \log Y_1$; (ii) the size of each group is $j^8 = Y_1$; (iii) the group ID space is at most $\hat{N}/2^j = X_1$.
\item[Inductive Step.] For $1 < r \leq j^\star$,   we show that the requirement of $\densealgo_{i-1}(X_r, Y_r)$ is met after the previous recursive call $\densealgo_{i-1}(X_{r-1}, Y_{r-1})$:   (i)  the number of groups is at least $\frac{X_{r-1} / b_{i-1}(Y_{r-1})}{\log a_{i-1}(Y_{r-1})} = X_r / \log Y_r$; (ii) the size of each group is $a_{i-1}(Y_{r-1}) = Y_r$; (iii) the group ID space is at most $X_{r-1}/ b_{i-1}(Y_{r-1}) = X_r$.
\end{description}
It is also straightforward to see that the output (group size, number of groups, and group ID space size) of the last recursive call $\densealgo_{i-1}(X_{j^\star}, Y_{j^\star})$ already satisfies the requirement of the output of $\densealgo_i(\hat{N}, j)$, since we have $X_{j} / b_{i-1}(Y_{j}) = \hat{N} / b_i(j)$ and $a_{i-1}(Y_{j}) = a_i(j)$. Next, we show that the number of  devices that drops out during the execution of $\densealgo_i(\hat{N}, j)$ is at most $2/j$ of all devices.
Let $f_i(j)$ be the fraction of devices the are terminated during the execution of $\densealgo_i(\hat{N}, j)$.
The analysis in Section~\ref{densealgo-base} implies that $f_1(j) \leq \frac{1}{j}$.
We prove that $f_i(j) \leq \frac{2}{j}$.
\begin{align*}
1 - f_i(j)
&\geq  (1 - f_1(j)) \prod_{r = 1}^{j^\star}  (1 - f_{i-1}(Y_r)) \\
&\geq  (1 - 1/j)  \prod_{r = 1}^{j^\star}   (1 - 2 / Y_r) & \text{(by induction hypothesis)} \\
&\geq (1 - 2/j).
\end{align*}

\paragraph{Energy Complexity.} 
During $\densealgo_{i}(\hat{N}, j)$, each device uses $O(1)$ energy in the initialization step.
Consider a device $s$ participating in the $r$th recursive call.
If $r > 1$, then $s$ uses $O(1)$ energy to learn the information of $\phi_{r - 1}^{-1}(s)$.
The execution of  $\densealgo_{i-1}(X_r, Y_r)$ costs $O(i-1)$ energy.
Thus, each device spends $O(i)$ energy during the execution of $\densealgo_i(\hat{N}, j)$.

\paragraph{Time Complexity.}
Let $T_i(\hat{N})$ be the runtime of $\densealgo_{i}(\hat{N}, j)$ (for any $j$).
The analysis in Section~\ref{densealgo-base} implies that $T_1(\hat{N}) \leq C \hat{N}$ for some constant $C$.
We prove that $T_i(\hat{N}) \leq 10 C \hat{N} = O(\hat{N})$, for all $i$.
\begin{align*}
T_i(\hat{N}) 
&= T_1(\hat{N}) + \sum_{r=1}^{j^\star} T_{i-1}(X_r)\\
&\leq C \hat{N} + \sum_{r=1}^{j^\star} (10 C) X_r  & \text{(by induction hypothesis)} \\
&= C \hat{N} + 10C \sum_{r=1}^{j^\star} X_r\\
&\leq  C \hat{N} + 10C \cdot 0.9  \hat{N}\\
&= 10C \hat{N}.
\end{align*}
To summarize, $\densealgo_{i}(\hat{N}, j)$ costs $O(\hat{N})$ time and $O(i)$ energy, and
the constant hidden in $O(\hat{N})$ is an absolute constant independent of $i$.

\newcommand{\Verify}{\textsf{Verify}}
\newcommand{\ExpSearch}{\textsf{ExpSearch}}
\newcommand{\EstimateSize}{\sf EstimateSize}

\section{Randomized Upper Bounds}\label{app-sect:randUB}

In this section we present randomized algorithms for \ApproximateCounting{} matching the energy complexity lower bound proved in Section~\ref{app-sect:randLB}. In~\cite{BenderKPY16}, a randomized algorithm for \ApproximateCounting{} in \strong{} using $O(\log(\log^* n))$ energy is devised. They showed that any circuit of constant fan-in, with input bits encoded as $\noise = 1$ and $\silence = 0$, can be simulated with $O(1)$ energy cost, and an estimate of the network size can be computed by such a circuit.
The circuit simulation of \cite{BenderKPY16} makes extensive use of collision detection.
In this section we demonstrate a different approach to \ApproximateCounting{}
based on our dense \Census{} algorithm, which can be implemented in all four collision detection models.

\begin{theorem}\label{thm-rand-ub}
There is an algorithm that, with probability $1-1/\poly(n)$, solves \ApproximateCounting{}  in $n^{o(1)}$ time with energy cost $O(\log^\ast n)$ if the model is \weak\ or \nocd, or $O(\log (\log^\ast n))$ if the model is \strong{} or \recv.
\end{theorem}


\subsection{Verifying the Correctness of an Estimate}
\label{sect:checksize}
In this section, we show how to use a dense \Census{} algorithm to verify whether a given estimate $\tilde{n}$ of network size is correct.
Suppose that there are $n$ devices agreeing on a number $\tilde{n}$.
We present an algorithm $\Verify(\tilde{n})$ that is able to check whether $\tilde{n}$ is a good estimate of $n$.
 We require that (i) a leader is elected if $n/1.5 \leq \tilde{n} \leq 1.5 n$, and (ii) no leader is elected if $\tilde{n} \geq 1.9  n$ or $\tilde{n} \leq n/1.9$.\footnote{In general, the constants $1.5, 1.9$ can both be made arbitrarily close to 1, at the cost of more time and energy.}
 The algorithm consists of two steps. The first step is to assign IDs in $[N]$ to some devices, where $N = \Theta(\log \tilde{n})$. The second step is to check whether  $\tilde{n}$ is a correct estimate via a dense \Census{} algorithm on the ID space $[N]$ with a density parameter $c$ to be determined.

\paragraph{Step~1---ID Assignment.}
We first consider the case where sender-side collision detection is available (i.e., \strong{} and \weak). We initialize $S_{\operatorname{bad}}= \emptyset$. The procedure of ID assignment consists of
$N$ time slots. For each $i \in [N]$, at the $i$th time slot each device $s \notin S_{\operatorname{bad}}$ transmits a message with probability $1/\tilde{n}$  to bid for the ID $i$.
 If a device $s$ hears back its message at the $i$th time slot, then $s$ is the only one that transmits at the $i$th time slot, and so we let $s$ assign itself the ID $i$.

 We let $\beta$ be an upper limit on the number of times a device can transmit, where $\beta$ is a sufficiently large constant.
 The purpose of setting this limit is to ensure that the energy cost is low.
 For each device $s$, if $s$ has already transmitted for $\beta$ times during the first $i$ time slots, then we add $s$ to the set $S_{\operatorname{bad}}$ at the end of the $i$th time slot, and $s$ is not allowed to transmit in future time slots during the ID assignment.

\medskip

Next, we consider the case where sender-side collision detection is not available  (i.e., \recv{} and \nocd). In this case, a transmitter does not know whether it is the only one transmitting. To resolve this issue, we increase the number of time slots from $N$ to $2 N$.

Let $i \in [N]$.
At the beginning of the $(2i-1)$th time slot, each device $s \notin S_{\operatorname{bad}}$ joins the set  $A_i$ with probability $1/\tilde{n}$,
and then each device $s \notin S_{\operatorname{bad}} \cup A_{i}$ joins the set  $B_i$ with probability $1/\tilde{n}$.

We will assign the ID $i$ to a device $s$ if $s \in A_{i}$ and $|A_i| = |B_i| = 1$.
The following procedure allows each device $s \in A_{i}$ to test if $|A_i| = |B_i| = 1$.
At the $(2i-1)$th time slot, all devices in $A_{i}$ transmit, and all devices in $B_i$ listen.
At the $(2i)$th time slot, all devices in $B_{i}$ that have successfully received a message at the $(2i-1)$th time slot transmit, and all devices in $A_i$ listen. Notice that a device $s \in A_{i}$  successfully receive a message at  the $(2i)$th time slot if and only if $|A_i| = |B_i| = 1$.

Similarly, we set $\beta$ as an upper limit on the number of times a device can join the sets $A_i$ and $B_i$, $i \in [N]$.
   Any device $s$ that has already joined these sets for  $\beta$ times is added to the set  $S_{\operatorname{bad}}$.


\medskip

Define $c = 0.325$ if the model is \strong{} or \weak; otherwise let $c = 0.325^2$.
The following lemma relates the density of the ID space to the accuracy of the estimate $\tilde{n}$.

\begin{lemma}\label{lem-network-size}
Suppose that $\tilde{n} \geq 100$.
With probability $1 - \min \{ n^{-\Omega(1)}, \tilde{n}^{-\Omega(1)}\}$, the following conditions are met:
(i)  when $\tilde{n} \geq 1.9n$ or $\tilde{n} \leq n/1.9$, either $|S_{\operatorname{bad}}| > 0$ or the number of IDs that are assigned to devices is smaller than $c N$
(ii) when $n/1.5 \leq \tilde{n} \leq 1.5 n$, we have $|S_{\operatorname{bad}}| = 0$ and the number IDs that are assigned to devices  is higher than $c N$.
\end{lemma}
\begin{proof}
We write $\mathcal{A}$ to denote the ID assignment algorithm, and write $\mathcal{A}'$ to denote a variant of the ID assignment algorithm that allows each device $s \in S_{\operatorname{bad}}$ to continue participating (i.e., there is no upper limit about the number of transmission per device). The algorithm $\mathcal{A}'$ is much easier to analyze than $\mathcal{A}$.\footnote{In the analysis of $\mathcal{A}'$, we still maintain the set $S_{\operatorname{bad}}$, but the devices in $S_{\operatorname{bad}}$ do not stop participating.} It is straightforward to see that in  $\mathcal{A}'$ the probability that an ID $i \in [N]$ is assigned is $\Prob[\mathrm{Binomial}(n, 1/ \tilde{n})=1]$ (resp., $\Prob[\mathrm{Binomial}(n, 1/ \tilde{n})=1] \cdot \Prob[\mathrm{Binomial}(n-1, 1/ \tilde{n})=1]$) when the model is  \strong{} or \weak\ (resp., \recv{} or \nocd).

We only prove the lemma for the case where the model is \strong{} or \weak; the other case is similar.
Observe that the following inequalities hold, given that  $\tilde{n} \geq 100$.
If  $\tilde{n} \geq 1.9n$ or $\tilde{n} \leq n/1.9$, then $$\Prob[\mathrm{Binomial}(n, 1/ \tilde{n})=1] < 0.32 = c - 0.005 < c.$$
If $n/1.5 \leq \tilde{n} \leq 1.5 n$, then $$\Prob[\mathrm{Binomial}(n, 1/ \tilde{n})=1] > 0.33 = c + 0.005 > c.$$

We use subscript to indicate whether a probability or an expected number refers to $\mathcal{A}$ or $\mathcal{A}'$. For instance, given an event $A$, the notation  $\Prob_{\mathcal{A}}[A]$ is the probability that $A$ occurs in an execution of $\mathcal{A}$. We write $X$ to denote the number of IDs in $[N]$ assigned to devices.
We define $\mu$ as $\Expect_{\mathcal{A}'}[X] = N \Prob[\mathrm{Binomial}(n, 1/ \tilde{n})=1]$.

\paragraph{Case 1.}
Suppose $\tilde{n} \geq 1.9 n$. We need to prove that $\Prob_{\mathcal{A}}[|S_{\operatorname{bad}}| = 0 \wedge X \geq cN] = \tilde{n}^{-\Omega(1)}$. Observe that
$$\Prob_{\mathcal{A}}[|S_{\operatorname{bad}}| = 0 \wedge X \geq cN]
= \Prob_{\mathcal{A}'}[|S_{\operatorname{bad}}| = 0 \wedge X \geq cN]
\leq  \Prob_{\mathcal{A}'}[X \geq cN].$$
Thus, it suffices to show that $\Prob_{\mathcal{A}'}[X \geq cN] = \tilde{n}^{-\Omega(1)}$.
Let $\delta = \frac{c - \Prob[\mathrm{Binomial}(n, 1/ \tilde{n})=1]}{\Prob[\mathrm{Binomial}(n, 1/ \tilde{n})=1]} > 0$.
Then $\Prob_{\mathcal{A}'}[X \geq cN] =  \Prob_{\mathcal{A}'}[X \geq (1 + \delta)\mu]$.
By a Chernoff bound, this is at most $\exp(-\delta \mu /3)$ if $\delta > 1$, and is at most $\exp(-\delta^2 \mu /3)$ if $\delta \leq 1$.
Since $c - \Prob[\mathrm{Binomial}(n, 1/ \tilde{n})=1] > 0.005$, we have: $\delta \mu = (c - \Prob[\mathrm{Binomial}(n, 1/ \tilde{n})=1]) N \geq 0.005 N$ and  $\delta^2 \mu = \frac{\left(c - \Prob[\mathrm{Binomial}(n, 1/ \tilde{n})=1]\right)^2 N}{\Prob[\mathrm{Binomial}(n, 1/ \tilde{n})=1]} \geq 0.005^2 N$. Thus, $\Prob_{\mathcal{A}'}[X \geq cN] = \exp(-\Omega(N)) = \tilde{n}^{-\Omega(1)}$.

\paragraph{Case 2.}
Suppose $n/1.5 \leq \tilde{n} \leq 1.5 n$.
We need to prove that $\Prob_{\mathcal{A}}[|S_{\operatorname{bad}}| > 0 \vee X\leq cN] = \tilde{n}^{-\Omega(1)}$.
Observe that
\begin{align*}
\Prob_{\mathcal{A}}[|S_{\operatorname{bad}}| > 0 \vee X \leq cN]
&\leq \Prob_{\mathcal{A}}[|S_{\operatorname{bad}}| > 0] + \Prob_{\mathcal{A}}[X \leq cN   \wedge |S_{\operatorname{bad}}| = 0] \\
&\leq \Prob_{\mathcal{A}'}[|S_{\operatorname{bad}}| > 0] + \Prob_{\mathcal{A}'}[X \leq cN  \wedge |S_{\operatorname{bad}}| = 0] \\
&\leq \Prob_{\mathcal{A}'}[|S_{\operatorname{bad}}| > 0] + \Prob_{\mathcal{A}'}[X  \leq cN].
\end{align*}
Thus, it suffices to show that both $\Prob_{\mathcal{A}'}[X \leq cN]$ and $\Prob_{\mathcal{A}'}[|S_{\operatorname{bad}}| > 0]$ are upper bounded by $\tilde{n}^{-\Omega(1)}$.
Let $\delta = \frac{\Prob[\mathrm{Binomial}(n, 1/ \tilde{n})=1] - c}{\Prob[\mathrm{Binomial}(n, 1/ \tilde{n})=1]} > 0$.
Then $\Prob_{\mathcal{A}'}[X \leq cN] =  \Prob_{\mathcal{A}'}[X \leq (1 - \delta)\mu]$.
By a Chernoff bound, this is at most $\exp(-\delta \mu /3)$ if $\delta > 1$, and is at most $\exp(-\delta^2 \mu /2)$ if $\delta \leq 1$.
Similarly, $\Prob[\mathrm{Binomial}(n, 1/ \tilde{n})=1] - c > 0.005$, and hence $\Prob_{\mathcal{A}'}[X \leq cN] = \exp(-\Omega(N)) = \tilde{n}^{-\Omega(1)}$.

Next, we calculate $\Prob_{\mathcal{A}'}[|S_{\operatorname{bad}}| > 0]$. The probability that a device $s$ joins $S_{\operatorname{bad}}$ in $\mathcal{A}'$ is $\Prob[\mathrm{Binomial}(N, 1/ \tilde{n}) \geq \beta] \leq N^\beta \tilde{n}^{-\beta}$. By a union bound over all $n$ devices, $\Prob_{\mathcal{A}'}[|S_{\operatorname{bad}}| > 0] \leq N^{\beta} \tilde{n}^{-\beta} n = \tilde{n}^{-\Omega(1)}$, since $N = \Theta(\log(\tilde{n}))$, $n/1.5 \leq \tilde{n} \leq 1.5 n$, and $\beta = \Omega(1)$.

\paragraph{Case 3.}
Suppose  $\tilde{n} \leq n/1.9$.  We need to prove that $\Prob_{\mathcal{A}}[|S_{\operatorname{bad}}| = 0 \wedge X \geq cN] = n^{-\Omega(1)}$. Similar to the first case, it suffices to show that $\Prob_{\mathcal{A}'}[X \geq cN] = n^{-\Omega(1)}$.
Notice that the same calculation for the first case can be applied here, and so we already have $\Prob_{\mathcal{A}'}[X\geq cN] = \tilde{n}^{-\Omega(1)}$. Thus, we only need to focus on the situation where $n$ is significantly larger than $\tilde{n}$.

Assuming  $n \geq \tilde{n}^2$,
we have
\begin{align*}
\Prob_{\mathcal{A}'}[X > 0]
&\leq N \Prob[\mathrm{Binomial}(n, 1/ \tilde{n})=1] \\
&\leq N (n / \tilde{n}) (1 - 1/\tilde{n})^{n-1}\\
&= \exp(-\Omega(n / \tilde{n}))\\
&= \exp(-\Omega(\sqrt{n})).\qedhere
\end{align*}
\end{proof}

\paragraph{Step~2---Checking the Correctness of Estimate.} We run a dense \Census{} algorithm with ID space $[N]$ and parameter $c$.
It is possible that a device is assigned to multiple IDs, and in such case the device simulates multiple devices of different IDs in the dense \Census{} algorithm. If the number of IDs that are assigned is at least $c N$, then after solving  \Census, all devices that are assigned IDs know the list of all IDs in $[N]$ that are assigned to devices.

We first consider the case where sender-side collision detection is available.
Let $s_1$ be the device that has the smallest ID.
We allocate a special time slot $t^\ast$, where $s_1$ and all devices in $S_{\operatorname{bad}}$ transmit.
The device $s_1$ elects itself as the leader if (i) $s$ has collected a list of IDs of size at least $c N$ during the \Census{} algorithm (i.e., the number of IDs that are assigned to devices is at least $c N$), and (ii) $s$ is able to hear back its message at time $t^\ast$,
i.e., $S_{\operatorname{bad}} = \emptyset$.

For the case where  sender-side collision detection is not available, $s_1$ cannot simultaneously transmit and listen.
To solve this issue, we let $s_2$ be the device that holds the smallest ID in $[N]$ excluding the ones assigned to $s_1$. Notice that a device can be assigned at most $\beta$ IDs. We let $s_2$ listen to the time slot $t^\ast$, and $s_2$ elects itself as the leader if $s_2$ hears a message from $s_1$ and the ID list resulting from the \Census{} algorithm has size at least $c N$.

\medskip

The correctness of $\Verify(\tilde{n})$ follows from Lemma~\ref{lem-network-size}. The first step costs $O(N) = O(\log \tilde{n})$ time and $O(\beta) = O(1)$ energy. The second step costs $O(N) = O(\log \tilde{n})$ time and $O(\alpha(N)) = O(\alpha(\tilde{n}))$ energy. We conclude the following lemma.

\begin{lemma}\label{lem-network-size}
With probability $1 - \min \{ n^{-\Omega(1)}, \tilde{n}^{-\Omega(1)}\}$, the algorithm $\Verify(\tilde{n})$ accomplishes the following in time $O(\log \tilde{n})$ with energy  $O(\alpha(\tilde{n}))$. A leader is elected if $n/1.5 \leq \tilde{n} \leq 1.5 n$, and no leader is elected if $\tilde{n} \geq 1.9  n$ or $\tilde{n} \leq n/1.9$.
\end{lemma}

The asymptotic time complexity of the algorithm $\Verify(\tilde{n})$ is the same as the algorithm in~\cite{BenderKPY16} which works in \strong{} and is based on circuit simulation. However, the circuit simulation takes only $O(1)$ energy while $\Verify(\tilde{n})$ needs $O(\alpha(\tilde{n}))$ energy.

\subsection{Exponential Search}
\label{sect:exp-search}
Let $D= \{d_1, d_2, \ldots\}$ be an infinite set of positive integers such that $d_{i+1} \geq \gamma \cdot {d_i}$ for each $i \geq 1$, where  $\gamma > 1$ is some large enough constant. We define $\hat{i}$ as the index such that $d_{\hat{i}-1} < \log n \leq d_{\hat{i}}$, where $n$ is the network size. For the \strong{} and the \recv{} models, we present an algorithm $\ExpSearch(D)$ that estimates $\hat{i}$ within a $\pm 1$ additive error in $O(\log \hat{i})$ time.

We first define a 1-round subroutine {\sf Test}$(i)$ as follows. Each device transmits a message with probability $2^{-d_i}$, and all other devices listen to the channel. For each listener $s$, it decides  ``$i \geq \hat{i}$'' if the channel is silent, and it decides ``$i < \hat{i}$'' otherwise.  Each transmitter decides ``$i < \hat{i}$''. It is straightforward to see that all devices make the same decision. We have the following lemma.

\begin{lemma}\label{lem-exp-aux}
Consider an execution of {\sf Test}$(i)$.
The following holds with probability $1-n^{-\Omega(1)}$.
If $i \leq \hat{i}-2$, then all devices decide ``$i < \hat{i}$''.
If $i \geq \hat{i}+1$,  then all devices decide  ``$i \geq \hat{i}$''.
\end{lemma}
\begin{proof}
Recall that $\gamma = \Omega(1)$ is chosen to be sufficiently large.
For any $i \leq \hat{i}-2$, the probability that {\sf Test}$(i)$ returns  ``$i \geq \hat{i}$'' is $\Prob[\mathrm{Binomial}(n, 2^{-d_i})=0] = n(1-2^{-d_i})^n \leq n(1-2^{\frac{\log n}{\gamma}})^n = n \cdot (1-n^{-1/\gamma})^n \leq n \cdot  \exp(-n^{1-1/\gamma}) = n^{\Omega(1)}$.  For any $i \geq \hat{i}+1$, the probability that {\sf Test}$(i)$ returns  ``$i < \hat{i}$'' is $\Prob[\mathrm{Binomial}(n, 2^{-d_i})> 0] \leq n \cdot 2^{-d_i} \leq n \cdot 2^{-\gamma\log n} = n^{-\gamma + 1} = n^{\Omega(1)}$.
\end{proof}

Based on the subroutine {\sf Test}$(i)$, the procedure  $\ExpSearch(D)$ is defined as follows.
The first step is to repeatedly run  {\sf Test}$(i)$ for $i = 1,2,4,8, \ldots$ until we reach the first index $i'$
such that all devices decide ``$i' \geq \hat{i}$'' during {\sf Test}$(i')$.
Then, we conduct a binary search using  {\sf Test}$(i)$ on the set $\{1, 2,3, \ldots, i'\}$ to find the smallest index $i$ such that {\sf Test}$(i)$ returns  ``$i \geq \hat{i}$''.
Due to Lemma~\ref{lem-exp-aux}, if all {\sf Test}$(i)$ do not fail, then it is clear that such an index $\tilde{i}$ satisfies that $\tilde{i} \in \{\hat{i}-1, \hat{i}, \hat{i}+1\}$. We conclude the following lemma.

\begin{lemma}\label{lem-exp-search}
In the \strong{} and the \recv{} models, the algorithm $\ExpSearch(D)$ finds an index $\tilde{i}$ such that $\tilde{i} \in \{\hat{i}-1, \hat{i}, \hat{i}+1\}$ in $O(\log \hat{i})$ time with probability $1-n^{-\Omega(1)}$.
\end{lemma}

\subsection{Main Algorithm}
\label{sect:main-randalg}

In this section we prove Theorem~\ref{thm-rand-ub}.
We will present an algorithm that finds an estimate $\tilde{n}$ that is within a factor of 2 of the network size $n$, i.e., $n/2 \leq \tilde{n} \leq 2n$.
Our algorithm $\EstimateSize(D)$ takes an  infinite set $D= \{d_1, d_2, \ldots\}$ of positive integers as an input parameter.
We require that $d_{i+1} \geq \gamma{d_i}$ and $d_1$ is sufficiently large such that
$\sum_{k=d_1}^\infty 1/\sqrt{2^k} \leq 1$ and $\sqrt{2^{d_1}} \geq 100$.
We will later see that different choices of $D$ lead to different time-energy tradeoffs.

With respect to the set $D$, define $\hat{i}$  as the index such that $d_{\hat{i}-1} < \log n \leq d_{\hat{i}}$. The elements in the set $D$ play the roles of ``checkpoints'' in our algorithm.  The set $D$ is independent of $n$, but $\hat{i}$ is a function of $n$.
In subsequent discussion we assume $n > 2^{d_1}$, and so the index  $\hat{i}$ is well-defined.
The reason that we are allowed to make this assumption is that for the case where $n \leq 2^{d_1} = O(1)$, we can run any \ApproximateCounting{} algorithm to fine an estimate of $n$ in $O(1)$ time.
The algorithm $\EstimateSize(D)$ is as follows.

\paragraph{Initial Setup.}
For each integer $k \geq d_1$, a device $s$ is labeled $k$ with probability $1/\sqrt{2^k}$ in such a way that $s$ is labeled by at most one number; this is the reason that we require $\sum_{k=d_1}^\infty 1/\sqrt{2^k} \leq 1$. We write $S_k$ to denote the set of all devices labeled $k$. For the case that the model is \strong\ or \recv, we do $\ExpSearch(D)$, and let $\tilde{i}$ be the result of $\ExpSearch(D)$, and set $k_0 = d_{\tilde{i}-2}$. For the case that the model is \weak\ or \nocd, set $k_0 = d_1$.

\paragraph{Finding an Estimate.}
For $k=k_0, k_0+1, k_0+2, \ldots$, do the following task.
The devices in $S_k$ collaboratively run $\Verify(\sqrt{2^k})$.
For the special case that a checkpoint is met, i.e., $k = d_i$ for some $i$, do the following additional task.
Let $L_e$ (resp., $L_o$) be the set of  leaders elected in $\Verify(\sqrt{2^{k'}})$ for all even (resp., odd) $k'$ so far (i.e., $k' \in [k_0, k]$).
We let all devices in $L_o$ simultaneously announce their labels, while all other devices listen. If exactly one message $\tilde{k}$ is sent,  the algorithm is terminated with all devices agreeing on the same estimate $\tilde{n} = 2^{\tilde{k}}$.
If the algorithm has not terminated yet, repeat the above with $L_e$.

\begin{lemma}\label{lem-label}
Define $\hat{k} = \lceil \log n \rceil$. With probability $1-\exp(\Omega(\sqrt{n}))$, the following holds.
For each $k \in [1, \hat{k} - 2]$, we have $|S_k| \geq 1.9 \sqrt{n / 2}  \geq 1.9 \sqrt{2^k}$.
For each $k \in [\hat{k}+1, \infty)$, we have $|S_k| \leq  \sqrt{2n}/1.9  \leq \sqrt{2^k} / 1.9$.
For at least one of $k \in \{\hat{k}-1, \hat{k}\}$, we have $\sqrt{2^k}/1.5 \leq |S_k| \leq 1.5\sqrt{2^k}$.
\end{lemma}
\begin{proof}
First of all, with probability $1 - n\cdot\sum_{k=n+1}^\infty 1/\sqrt{2^k} = 1 - \exp(-\Omega(n))$, no device has label greater than $n$. Therefore, in what follows we only consider the labels in the range $\{1,2, \ldots, n\}$.

 Consider the case $k \leq \hat{k} - 2$. We have $\mu = \Expect[|S_k|] = n \cdot \sqrt{2^{-k}} \geq n \cdot \sqrt{2^{-(\log n - 1)}} = \sqrt{2n}$.
Using a Chernoff bound with $\delta = 0.05$, the probability that  $|S_k| \leq 1.9 \sqrt{2^k} \leq 1.9 \sqrt{n / 2} \leq (1 - \delta)\mu$ can be upper bounded by
$\exp(-\delta^2 \mu / 2) = \exp(-\Omega(\sqrt{n}))$.

Consider the case  $n \geq k \geq \hat{k}+1$. We have $\mu = \Expect[|S_k|] = n \cdot \sqrt{2^{-k}} \leq n \cdot \sqrt{2^{-(\log n + 1)}} = \sqrt{n/2}$.
Using a Chernoff bound with $\delta = 1/1.9$, the probability that  $|S_k| \geq \sqrt{2^k}/1.9 \geq   \sqrt{2n}/1.9 \geq (1+\delta)\mu$ can be upper bounded by
$\exp(-\delta^2 \mu / 3) = \exp(-\Omega(\sqrt{n}))$.

Among the two numbers in $\{\hat{k}-1, \hat{k}\}$, we select $k \in \{\hat{k}-1, \hat{k}\}$ such that $n/\sqrt{2} \leq  2^{k} \leq \sqrt{2} n$.  Then the expected number $\mu = \Expect[|S_k|]$ satisfies  $\sqrt{2^k}/1.5 < \sqrt{2^{k}}/\sqrt{2} \leq \mu \leq \sqrt{2} \cdot \sqrt{2^{k}} < 1.5\sqrt{2^k}$. Similarly, using a Chernoff bound, we can infer that the probability that $|S_k|$ is not within  $\sqrt{2^{k}}/1.5$ and $1.5\sqrt{2^{k}}$ is at most $\exp(-\Omega(\sqrt{n}))$.
\end{proof}

\begin{lemma}\label{lem-fail-prob}
In an execution of $\EstimateSize(D)$, with probability $1-n^{-\Omega(1)}$, none of $\Verify(\sqrt{2^k})$ fails.
\end{lemma}
\begin{proof}
We assume that the statement of Lemma~\ref{lem-label} holds, since it holds with probability $1-\exp(\Omega(\sqrt{n}))$.
Similarly, with probability $1 - n\cdot\sum_{k=n+1}^\infty 1/\sqrt{2^k} = 1 - \exp(-\Omega(n))$, no device has label greater than $n$. Therefore, in what follows we only consider the labels in the range $\{1,2, \ldots, n\}$.

By Lemma~\ref{lem-network-size}, the failure  probability of $\Verify(\sqrt{2^k})$ is at most $\min\{\sqrt{2^k}^{-\Omega(1)}, |S_k|^{-\Omega(1)}\}$.
Define $\hat{k} = \lceil \log n \rceil$.
If $k \geq \hat{k}+1$, then the failure probability of $\Verify(\sqrt{2^k})$ is at most $\sqrt{2^k}^{-\Omega(1)} = n^{-\Omega(1)}$.
By Lemma~\ref{lem-label}, if $k \leq \hat{k}$, then $|S_k| = \Omega(\sqrt{n})$, and so the failure probability of $\Verify(\sqrt{2^k})$ is at most $|S_k|^{-\Omega(1)} = n^{-\Omega(1)}$.
By a union bound over all $k \in \{1, \ldots, n\}$, the probability that at least one of $\Verify(\sqrt{2^k})$  fails is bounded by $n \cdot n^{-\Omega(1)} = n^{-\Omega(1)}$.
\end{proof}

\begin{lemma}\label{thm-main-alg}
In an execution of $\EstimateSize(D)$, with probability $1-n^{-\Omega(1)}$, all devices agree on an estimate $\tilde{n}$ such that $n/2 \leq \tilde{n} \leq 2n$ in  time $T(n) = O(d_{\hat{i}}^2)$ with energy cost $E(n)$, where  $E(n) = O(\log \hat{i})$ in \strong{} and \recv, and $E(n) = O(\hat{i})$ in \weak{} and \nocd.
\end{lemma}
\begin{proof}
We assume that all of $\ExpSearch(D)$ and $\Verify(\sqrt{2^k})$, for all $k$, do not fail, since the probability that at least one of them fails is $n^{-\Omega(1)}$, in view of Lemma~\ref{lem-exp-search} and Lemma~\ref{lem-fail-prob}.
We also assume that the statement of Lemma~\ref{lem-label} holds, since it holds with probability $1-\exp(\Omega(\sqrt{n}))$.

 Define $\hat{k} = \lceil \log n \rceil$.
A consequence of Lemma~\ref{lem-label} is  that (i) there exists $k \in \{\hat{k}-1, \hat{k}\}$ such that $\Verify(\sqrt{2^k})$ elects a leader, and (ii) for each $k \notin \{\hat{k}-1, \hat{k}\}$,  $\Verify(\sqrt{2^k})$ does not elect a leader.
Recall that $\hat{i}$  is defined as the index $i$ such that $d_{{i}-1} < \log n \leq d_{{i}}$, and so  the algorithm  $\EstimateSize(D)$ must end by the iteration $k = d_{\hat{i}}$ with a correct estimate of $n$.

In what follows, we analyze the runtime and the energy cost of $\EstimateSize(D)$.
Since each $\Verify(\sqrt{2^k})$  takes  $O(k)$ time, the total time complexity is $d_{\hat{i}} \cdot O({d_{\hat{i}}}) = O(d_{\hat{i}}^2)$.

The energy cost per device in $S_k$ to make the call $\Verify(\sqrt{2^k})$ is $O(\alpha(|S_k|)) = O(\alpha(n))$, which will never be the dominant cost.
In \weak{} and \nocd, the asymptotic energy cost of $\EstimateSize(D)$ equals the number of times we encounter a checkpoint $k=d_i$ for some $d_i \in D$, which is $O(\hat{i})$.

Next, we analyze the energy cost in \strong{} and \recv. Due to $\ExpSearch(D)$ during the initial setup, the number of checkpoints encountered is reduced to $O(1)$, as we start with $k_0 = d_{\tilde{i}-2}$, where the index $\tilde{i}$ is the result of  $\ExpSearch(D)$ and satisfies $\tilde{i} \in \{\hat{i}-1, \hat{i}, \hat{i}+1\}$. Therefore, the asymptotic energy cost of $\EstimateSize(D)$ equals the energy cost of $\ExpSearch(D)$, which is $O(\log \hat{i})$.
\end{proof}

In addition to solving \ApproximateCounting, the algorithm $\EstimateSize(D)$ also solves \LeaderElection. Notice that by the end of $\EstimateSize(D)$, a unique device $s$ announces its label while all other devices listen to the channel.

\paragraph{Setting the Checkpoints.} Lemma~\ref{thm-main-alg} naturally offers a time-energy tradeoff.
We demonstrate how different choices of the checkpoints $D$ give rise to different runtime and energy cost specified in Table~\ref{table:time-energy-intro}.
For the base case, the first checkpoint $d_1$ is always chosen as a large enough constant so as to meet the three conditions: $d_{i+1} \geq \gamma{d_i}$,  $\sum_{k=d_1}^\infty 1/\sqrt{2^k} \leq 1$, and $\sqrt{2^{d_1}} \geq 100$.
In subsequent discussion we only focus on how we define $d_{i}$ inductively.

To obtain $O(\log^2 n)$ runtime, we set $d_i = \gamma d_{i-1}$ for some constant $\gamma$.
Recall that $\hat{i}$  is defined as the index $i$ such that $d_{{i}-1} < \log n \leq d_{{i}}$, and so $d_{\hat{i}} \leq \gamma \log n$. Thus, the runtime is $O(d_{\hat{i}}^2) = O(\log^2 n)$.
 With such checkpoints, the energy cost in \weak{} and \nocd{} is $O(\hat{i}) = O(\log \log n)$; the energy cost in \strong{} and \recv{} is $O(\log \hat{i}) = O(\log \log \log n)$.

For $0 < \epsilon \leq O(1)$. To obtain $O(\log^{2+\epsilon} n)$ runtime,  we set $d_i = d_{i-1}^{1+\epsilon / 2}$.
Notice that $d_{\hat{i}} \leq \log^{1+\epsilon / 2} n$. Thus, the runtime is $O(d_{\hat{i}}^2) = O(\log^{2+\epsilon} n)$.
With such checkpoints, the energy cost in \weak{} and \nocd{} is $O(\hat{i}) = O( \log_{1+\epsilon / 2} \log \log n) = O(\epsilon^{-1} \log \log \log n)$; the energy cost in \strong{} and \recv{} is $O(\log \hat{i}) = O(\log (\epsilon^{-1} \log \log \log n))$.

Theorem~\ref{thm-rand-ub} is proved as follows.
Setting $d_i = b^{d_{i-1}}$ for any constant $b > 1$ yields a polynomial time algorithm achieving the desired energy complexity, as $O(\hat{i}) = O(\log^\ast n)$ and $O(\log \hat{i}) = O(\log \log^\ast n)$. To obtain $n^{o(1)}$ runtime while maintaining the same asymptotic energy complexity, we can use  $d_i = 2^{2^{(\log d_{i-1})^\epsilon}}$, for some constant $0 < \epsilon < 1$. 
Since $\hat{i}$  is chosen such that $d_{{\hat{i}}-1} < \log n$,
we have $d_{\hat{i}} \leq 2^{2^{(\log \log n)^\epsilon}}$, and so the runtime is $O(d_{\hat{i}}^2) =  O(2^{2^{1 + (\log \log n)^{\epsilon}}}) = n^{o(1)}$.


\ignore{
Lastly, due to the nature of the dense \Census{} algorithm, our \ApproximateCounting{} requires $\poly (\log n)$ message size. To lower the message size complexity to  $O(\log \log n)$ (regardless of the choice of $D$), one can use the circuit simulation described in Appendix~\ref{app-sect:cirsim}
to implement {\sf Test-Network-Size}. But this comes with the cost of increasing the runtime for \weak{} and \nocd{} models.
}


\section{Conclusion and Open Problems}

In this paper we exposed two exponential separations in the energy complexity of \LeaderElection{}
on various wireless radio network models.  The upshot is that
randomized algorithms in $\{$\strong, \recv$\}$ are exponentially more efficient than those in
$\{$\weak, \nocd$\}$,
but deterministic algorithms in
$\{$\strong,\weak$\}$ are exponentially more efficient
than those in $\{$\recv,\nocd$\}$.
This exponential separation also occurs in the closely related problem of \ApproximateCounting.

There are a few intriguing problems that remain open in the context of \emph{single-hop} networks.
For example, is $\Theta(\alpha(N))$ the correct complexity of \LeaderElection\ and \Census\ for dense instances?
What is the true complexity of \ApproximateCounting?  In general it should exhibit a 3-way tradeoff between energy, time,
\emph{and} a given error probability.  Can $n$ anonymous devices assign themselves IDs in $\{1,\ldots,n\}$ with
$o(\log\log n)$ energy~\cite{NakanoO00} in the worst case?

Little is known about the energy-complexity of fundamental
graph problems in arbitrary (multi-hop) networks. Recently, Chang et al.~\cite{ChangDHHLP17} studied the energy complexity for \emph{broadcasting}
in multi-hop networks. It is an interesting future work direction to investigate the energy complexity for other fundamental graph problems.

\paragraph{Acknowledgement.} We would like to thank Tomasz Jurdzi\'{n}ski for discussing
the details of \cite{JurdzinskiKZ02,JurdzinskiKZ02b,JurdzinskiKZ02c,JurdziskiKZ03}
and to Calvin Newport for assisting us with the radio network literature.

\nocite{FinemanGKN16,FinemanNW16}

\bibliographystyle{plain}
\bibliography{radnet}


\end{document}